\newcommand{\eps}{\varepsilon}
\title{Asynchronous Rumor Spreading on Random Graphs \thanks{An extended abstract of this paper has been published in the proceedings of the 24th International Symposium on Algorithms and Computation (ISAAC '13).}}
\author{K. Panagiotou \and L. Speidel}
\institute{K. Panagiotou 
			\at University of Munich, Mathematics Institute, Theresienstr. 39, 80333 Munich, Germany 
		    \and L. Speidel
			\at University of Oxford, Doctoral Training Centre in Systems Biology, Rex Richards Building, South Parks Road, Oxford OX1 3QU, UK, \email{leo.speidel@outlook.com}
}
\begin{document}

\maketitle

\begin{abstract}
We perform a thorough study of various characteristics of the asynchronous push-pull protocol for spreading a rumor on Erd\H os-R\'enyi random graphs $G_{n,p}$, for any $p>c\ln(n)/n$ with $c>1$. In particular, we provide a simple strategy for analyzing the asynchronous push-pull protocol on arbitrary graph topologies and apply this strategy to $G_{n,p}$. We prove tight bounds of logarithmic order for the total time that is needed until the information has spread to all nodes. Surprisingly, the time required by the asynchronous push-pull protocol is asymptotically almost unaffected by the average degree of the graph. Similarly tight bounds for Erd\H os-R\'enyi random graphs have previously only been obtained for the synchronous push protocol, where it has been observed that the total running time increases significantly for sparse random graphs. Finally, we quantify the robustness of the protocol with respect to transmission and node failures. Our analysis suggests that the asynchronous protocols are particularly robust with respect to these failures compared to their synchronous counterparts.

\keywords{gossip algorithms \and asynchronous rumor spreading \and push-pull protocol \and random graphs}
\end{abstract}

\section{Introduction}

Rumor spreading protocols have become fundamental mechanisms for designing efficient and fault-tolerant algorithms that disseminate information in large and complex networks. In the classical setting the algorithm that we will consider proceeds in synchronous rounds. Initially, some arbitrary node receives a piece of information. In each subsequent round, every node that knows the information transmits it to a randomly selected neighbor in the network. This operation is denoted as a \emph{push}. Moreover, every node that does not possess the information tries to learn it from a randomly selected neighbor; this operation is denoted as a \emph{pull}. Equivalently, we can say that in every round, every node contacts a randomly chosen neighbor and exchanges the information with it. 

Rumor spreading was first introduced in~\cite{Dem87}, where the problem of distributing updates consistently in replicated databases was considered. Subsequently it has found many other applications, such as the detection of failures in a distributed environment \cite{RMH98}, sampling of peers \cite{JVGKS07} and averaging in networks that consist of many sensors in a distributed fashion \cite{BABD05}. 

In this work we consider a variation of the classical push-pull algorithm that was introduced in \cite{Boyd06}. In that paper, striving for a more realistic setting, the authors modified the algorithm by dropping the assumption that all nodes are able to act in synchrony. In the \emph{asynchronous version} that we consider here, nodes do not contact other nodes simultaneously in synchronized rounds, but do so in times that arrive according to independent rate 1 Poisson processes at each node. That is, every informed node makes a push attempt and every uninformed node makes a pull attempt at a rate normalized to 1. In \cite{Boyd06} this is suggested as a possible solution if a centralized entity for facilitating time synchronization is not existent or has failed in the networks that we consider.

\subsection{Results}

In this paper we present a thorough study of various characteristics of the asynchronous push-pull algorithm. We will assume that the underlying network is an Erd\H{o}s-Renyi random graph~$G_{n,p}$, where each edge is included independently of all other edges with probability~$p$. For any~$p > c\ln(n)/{n}$, $c>1$, we show almost optimal bounds for the time that is needed until the information has spread to all nodes. We also quantify the robustness of the algorithm with respect to transmission and node failures.

Let us introduce some basic notation first. For a graph $G$ with $n$ nodes we assume that its node set is $[n]$, where $[n]=\{1, \dots, n\}$. For $v\in [n]$ we write $N_G(v)$ for the neighborhood of $v$ and $d_G(v) = |N_G(v)|$. For any $S \subseteq [n]$ we abbreviate $N_G(S) = \cup_{v \in S} N_G(v)$. Moreover, we denote by $e_G(S,R)$ the number of edges with one endpoint in each of the sets $S,R \subseteq [n]$ and abbreviate this quantity by $e_G(S)$ if $S=R$. If $G$ is clearly given from the context we may drop the subscript $G$ in our notation. Finally, we let $T(G)$ denote the (random) time that the asynchronous push-pull protocol needs to spread a rumor to all nodes in $G$, and we write $G_{n,p}$ for a random graph with $n$ vertices, where each edge is included independently with probability $p$.

We will also use the following notation. We write $H_n$ for the harmonic series $\sum_{j=1}^n j^{-1}$, $\ln(n)$ for the natural logarithm and $\log_b(n) = \ln(n)/\ln(b)$ for any $b>1$. For any $a,b\in \mathbb{R}$ we write $a \pm b$ for the interval $(a-b,a+b)$ and abbreviate $X = a \pm b$ for $X \in (a-b,a+b)$. We will not explicitly emphasize (in-)equalities that hold almost surely, i.e.\ with probability 1. Finally, w.h.p.\ abbreviates ``with high probability'' and means that an event (dependent on $n$) occurs with probability $1-o(1)$ as $n \rightarrow \infty$.
Note that when studying rumor spreading on random graphs we consider the product of two probability spaces - one for the random graph and one for rumor spreading. Usually, w.h.p.\ will correspond to sampling from the product space. Sometimes, we will average over one probability space so that w.h.p.\ refers to sampling from the other probability space. However, this will not be reflected in our notation.

Our first result addresses the performance of the algorithm on random graphs with an edge probability that is significantly above the connectivity threshold $\ln(n) /n$ for the random graph $G_{n,p}$, see also~\cite{MR1864966}. 

\begin{theorem}
\label{MainThm1}
Let $p={\alpha(n)\ln(n) }/{n}$ for some $\alpha(n)=\omega(1)$. Then w.h.p.
\[
\mathbb{E}[T(G_{n,p})]=\left( 1 \pm \sqrt{\frac{34}{\alpha(n)}} \right)H_{n-1} + O\left( \frac{\ln(n)}{n} \right).
\]
Moreover, w.h.p.\
\[
T(G_{n,p}) = \mathbb{E}[T(G_{n,p})] + O(\alpha(n)^{-1/2} \ln(n)+1).
\] 
\end{theorem}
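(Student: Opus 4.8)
The plan is to regard the informed set $I_t$ as a continuous-time Markov chain on subsets of $[n]$ that starts at the single source, grows by exactly one node at each transition, and is absorbed at $[n]$; thus $T(G)=\sum_{k=1}^{n-1}\tau_k$, where $\tau_k$ is the holding time during which $|I_t|=k$. Directly from the definition of the protocol, when the informed set equals $S$ the total rate at which some new node becomes informed is
\[
R(S)=\sum_{v\in S}\frac{e(v,\bar S)}{d(v)}+\sum_{u\in\bar S}\frac{e(u,S)}{d(u)},\qquad \bar S=[n]\setminus S,
\]
the first sum accounting for pushes out of $S$ and the second for pulls into $S$. Hence, conditioned on the graph, each $\tau_k$ is exponential with rate $R(I_t)$ at the current set. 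For the complete graph this rate is $2k(n-k)/(n-1)$, and $\sum_k 1/R=\frac{n-1}{n}H_{n-1}=H_{n-1}-O(\ln(n)/n)$; the whole proof reduces to showing that on $G_{n,p}$ the rates, hence this sum, agree up to a factor $1\pm\sqrt{34/\alpha(n)}$.

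The core step, and the main obstacle, is a uniform estimate of $R(S)$ over all $2^n-2$ nonempty proper subsets. First I would establish degree concentration: since $d(v)\sim\mathrm{Bin}(n-1,p)$ with mean $\approx\alpha(n)\ln(n)$, a Chernoff bound and a union bound over the $n$ vertices give $d(v)=np(1\pm O(\alpha(n)^{-1/2}))$ simultaneously for all $v$ w.h.p.; this width $\alpha(n)^{-1/2}$ is precisely what forces the error term, and tracking the Chernoff constants is what produces the explicit $34$. Next I would control the cuts: $e(S,\bar S)\sim\mathrm{Bin}(k(n-k),p)$ has mean $k(n-k)p$, and a Chernoff bound shows a relative deviation $\delta$ fails with probability $\exp(-\Omega(\delta^2 k(n-k)p))$. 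Since $k(n-k)p\ge\tfrac12 k\,\alpha(n)\ln(n)$ for $k\le n/2$, this beats the $\binom{n}{k}\le(en/k)^k$ sets in a union bound exactly when $\delta=\Theta(\alpha(n)^{-1/2})$, and $k>n/2$ follows from $e(S,\bar S)=e(\bar S,S)$ (for very small $S$ one bounds $e(S,\bar S)=\sum_{v\in S}d(v)-2e(S)$ and notes the internal edges are negligible). Combining the two estimates yields, w.h.p.\ over $G$ and uniformly over all $S$ with $|S|=k$,
\[
R(S)=\frac{2e(S,\bar S)}{np}\bigl(1\pm O(\alpha(n)^{-1/2})\bigr)=\frac{2k(n-k)}{n}\Bigl(1\pm\sqrt{\tfrac{34}{\alpha(n)}}\Bigr).
\]

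Granting this estimate, the expectation statement is immediate: on a good graph $\mathbb{E}[T\mid G]=\sum_{k}\mathbb{E}[1/R(S_k)\mid G]$, and since every set of size $k$ has rate inside the stated window, $\mathbb{E}[T\mid G]=(1\pm\sqrt{34/\alpha(n)})\sum_k\frac{n}{2k(n-k)}+O(\ln(n)/n)=(1\pm\sqrt{34/\alpha(n)})H_{n-1}+O(\ln(n)/n)$. For the concentration statement I would fix a good graph and use a stepwise coupling: because $r_k^-:=\frac{2k(n-k)}{n}(1-\sqrt{34/\alpha(n)})\le R(I_t)\le r_k^+$ holds deterministically, the inverse-CDF coupling with one uniform per step sandwiches $T$ between two sums of \emph{independent} exponentials, $\sum_k\mathrm{Exp}(r_k^+)\le T\le\sum_k\mathrm{Exp}(r_k^-)$. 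Each bounding sum has mean $(1\pm\sqrt{34/\alpha(n)})H_{n-1}+O(\ln(n)/n)$ and variance $\sum_k (r_k^{\mp})^{-2}=O(1)$, because $\sum_k n^2/(k(n-k))^2$ converges (the mass sits at the two ends, where it behaves like $\sum 1/k^2$).

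A Bernstein-type tail bound for sums of independent exponentials then concentrates each bounding sum within $O(1)$ of its mean w.h.p.\ (the limiting terms being the $O(1)$ smallest-rate endpoints $k\approx 1$ and $k\approx n-1$), while the spread between the two means is $O(\alpha(n)^{-1/2}\ln(n))$, i.e.\ the graph-induced variation of the conditional mean. Since $\mathbb{E}[T]$ lies in the same window, combining the two pieces gives $T=\mathbb{E}[T]+O(\alpha(n)^{-1/2}\ln(n)+1)$. The delicate point throughout is the uniform cut estimate of the second paragraph, where the union bound over exponentially many sets is only barely affordable and is exactly what pins down the constant in the error term.
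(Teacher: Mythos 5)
Your proposal is correct and follows essentially the same route as the paper: decompose $T(G)$ into exponential holding times whose rates are determined by the push/pull cut expression, establish w.h.p.\ uniform control of degrees and of $e(S,\bar S)$ over all sets (the paper cites this edge-uniformity from Lemma~\ref{lemma:dense} rather than reproving it), deduce the expectation via $\sum_k n/(2k(n-k))$, and obtain concentration by sandwiching between sums of independent exponentials with exponential tail bounds. Your inverse-CDF coupling and Bernstein bound are a clean equivalent of the paper's conditional-independence induction (Lemma~\ref{stochbound}) and moment-generating-function argument (Lemma~\ref{momentgen}).
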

Some remarks are in place. First of all, note that if $p \ge \ln^3(n)/n$, then the theorem states that w.h.p.\ $\mathbb{E}[T(G_{n,p})] = H_{n-1} + O(1)$, and further that w.h.p.\ $T(G_{n,p}) = \mathbb{E}[T(G_{n,p})] + O(1)$. This is tight, and indeed it is best possible, since just the time until the second node is informed has a variance of $\Omega(1)$. For almost no other rumor spreading protocol bounds that determine the total running time up to an additive constant are known (see Section \ref{sec: related work} for a discussion). On the other hand, this result might not be completely unexpected: A simple argument for large deviations from the average spreading time shows that for the complete graph $K_n$, $T(K_n) = H_{n-1} + O(1)$ w.h.p. However, for $p = \omega(\ln(n)/n)$, the edges of $G_{n,p}$ are distributed very uniformly, in the sense that between \emph{any} two sets of nodes, the number of edges is close to the expected value. So, the dynamics of the information spreading process are not affected crucially by the fact that the graph does not contain all edges.

For $p$ closer to the connectivity threshold the edges are not distributed as uniformly as in the former case and a different behavior might be expected. Indeed it has been shown that the synchronous push protocol cannot inform all nodes in a time bounded by $C \ln(n)$, with $C$ independent of $p$ (see Section \ref{sec: related work} for a discussion). Our second result, however, shows that the time required by the asynchronous push-pull protocol can be uniformly bounded independently of $p$ and is as such \emph{asymptotically} almost unaffected by the average degree of the graph.
\begin{theorem}
\label{MainThm2}
Let $p={c\ln(n)}/{n}$ for some $c>1$. Then w.h.p.
\[
\mathbb{E}[T(G_{n,p})] ~\le~ 1.5 H_{n-1} + O(\ln^{3/4}(n)).
\]
Moreover, w.h.p.\ \[
T(G_{n,p}) ~\le~ \mathbb{E}[T(G_{n,p})] + O(\ln^{3/4}(n)).
\]
\end{theorem}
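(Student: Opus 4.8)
The plan is to condition on the random graph $G=G_{n,p}$ and to view the spreading process as a time-inhomogeneous pure birth process in the number $k=|I|$ of informed nodes. Writing $U=[n]\setminus I$, $d_I(w)=|N(w)\cap I|$ and $d_U(w)=|N(w)\cap U|$, the instantaneous rate of passing from $k$ to $k+1$ informed nodes is
\[
R(I) = \underbrace{\sum_{v \in I} \frac{d_U(v)}{d(v)}}_{\text{push}} \;+\; \underbrace{\sum_{u \in U} \frac{d_I(u)}{d(u)}}_{\text{pull}},
\]
since a push informs a new node exactly when it lands on an uninformed neighbour, and a pull exactly when the puller has an informed neighbour. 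As the waiting time at level $k$ is, conditionally on the configuration, exponential with parameter $R(I_k)$, we have $\mathbb{E}[T(G)]=\sum_{k=1}^{n-1}\mathbb{E}[1/R(I_k)]$. Thus it suffices to produce a \emph{configuration-free} lower bound $\widehat R_k\le R(I_k)$ that holds w.h.p.\ uniformly over all reachable informed sets of size $k$, and to estimate $\sum_k 1/\widehat R_k$.

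First I would record the structural facts about $G_{n,p}$ with $p=c\ln(n)/n$ that hold w.h.p.: all degrees lie in $[\delta,\Delta]$ with $\delta,\Delta=\Theta(\ln n)$, a uniform discrepancy estimate $e(S,R)=p|S||R|\pm O(\sqrt{\ln n}\,\sqrt{|S||R|})$ for all $S,R\subseteq[n]$, and the fact that every small vertex set spans few internal edges. The decisive feature of this regime, as opposed to the dense regime of Theorem~\ref{MainThm1}, is that the \emph{relative} degree spread $\Delta/\delta$ is bounded away from $1$: near the threshold the sparsest vertices have degree a constant factor below the average, and these are exactly the vertices that resist being informed.

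The core is then a three-phase estimate. In the \emph{startup phase} $k\le n/\ln n$, the informed set spans $o(k\ln n)$ internal edges, so $d_U(v)=(1-o(1))d(v)$ for essentially every $v\in I$ and the push term alone is $(1-o(1))k$; at the same time $e(I,U)=(1-o(1))\sum_{v\in I}d(v)$ is distributed over mostly typical-degree boundary vertices, so the pull term is also $(1-o(1))k$. Hence $\widehat R_k=(2-o(1))k$ and this phase costs at most $\tfrac12 H_{n-1}$ up to lower-order terms. In the \emph{bulk}, the discrepancy estimate yields $R(I)=(2-o(1))k(n-k)/n$, whose reciprocal sums to only $O(\ln\ln n)$. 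The crux is the \emph{finishing phase} $n-k\le n/\ln n$: here the surviving uninformed vertices are precisely the low-degree ones, for which the push term \emph{cannot} be lower bounded by $\Omega(n-k)$, because the pushes of their (high-degree) informed neighbours are strongly diluted. I would therefore discard push entirely and use only pull: once $U$ is small it spans few internal edges, so $d_I(u)=(1-o(1))d(u)$ for all $u\in U$ and the pull term is $(1-o(1))(n-k)$, contributing at most $H_{n-1}$. Summing the three phases gives $\mathbb{E}[T(G)]\le 1.5\,H_{n-1}+O(\ln^{3/4}n)$; the factor $1.5=\tfrac12+1$ reflects exactly that both mechanisms drive the startup while only pull can be trusted to finish, and the additive error comes from the transition windows $x(1-x)\asymp 1/\ln n$ together with the $O(1/\sqrt{\ln n})$ relative degree fluctuations, whose contributions balance to $O(\ln^{3/4}n)$.

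I expect the finishing phase to be the main obstacle: one must show that, however the graph-dependent informed set evolves, the uninformed remainder never clusters enough to waste pulls (i.e.\ $e(U)=o((n-k)\delta)$) and every remaining vertex keeps pull-success probability $1-o(1)$, uniformly over all reachable configurations; this is precisely where the correlation between $G$ and the informed set forces the use of the \emph{uniform} structural bounds above rather than typical-case estimates. For the concentration statement I would dominate $T(G)$ from above by a sum of \emph{independent} exponentials with the configuration-free rates $\widehat R_k$; since $\sum_k \widehat R_k^{-2}=O(1)$ and $\max_k \widehat R_k^{-1}=O(1)$, a Bernstein-type tail bound for sums of independent sub-exponential variables gives deviations of order $\sqrt{\ln n}$ with probability $1-o(1)$, comfortably inside the claimed $O(\ln^{3/4}n)$, the remaining slack absorbing the error from conditioning on the w.h.p.\ structural event.
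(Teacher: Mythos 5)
Your overall architecture coincides with the paper's: exponential holding times with rate equal to the push sum plus the pull sum (Lemma~\ref{Masterlemma}), a phase decomposition in which the growth phase runs at rate $(2-o(1))j$ and costs $\tfrac12 H_{n-1}$, the bulk costs only $O(\ln\ln(n))$, and the finishing phase is driven by pull alone at rate $(1-o(1))(n-j)$ and costs $H_{n-1}$ (Corollary~\ref{sparse:tj} and Lemma~\ref{sparse: expectation}), followed by stochastic domination by independent exponentials and an exponential-moment tail bound (Lemmas~\ref{stochbound} and~\ref{momentgen}). Your explanation of the constant $1.5=\tfrac12+1$ and your treatment of the finishing phase via $e(U)\le |U|\ln\ln(n)$ are exactly the paper's.

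The genuine gap is in the startup phase, at the point you dispose of in one clause: the claim that the pull term is $(1-o(1))k$ for $\ln(n)\le k\le n/\ln(n)$. The pull sum $\sum_{w\in U}|N(w)\cap I|/d(w)$ normalizes by the \emph{neighbour's} degree, so the constant-factor degree spread near the threshold does not cancel the way it does in the push term. To get $(1-o(1))k$ one needs (a) $e(I,[n]\setminus I)=(1\pm o(1))\,k(n-k)p$, i.e.\ the boundary edge count of the informed set is what it would be if every informed vertex had degree exactly $pn$, and (b) that all but a sub-logarithmic number of the boundary neighbours have degree $pn(1\pm o(1))$. Neither follows from the toolkit you list: your uniform discrepancy estimate $e(S,R)=p|S||R|\pm O(\sqrt{\ln(n)\,|S||R|})$ has error at least as large as the main term throughout the range $|S|\le n/\ln(n)$, and indeed no two-sided $(1\pm o(1))$ estimate of $e(S,[n]\setminus S)$ can hold uniformly over \emph{all} sets of that size at $p=c\ln(n)/n$ (the set of the $s$ smallest-degree vertices already violates it by a constant factor). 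Without (a) the pull term is only bounded below by $(C/c)k$ with $C/c<1$ the ratio of minimum to average degree, the startup phase then costs $\frac{1}{1+C/c}H_{n-1}>\tfrac12 H_{n-1}$, and the constant $1.5$ is lost. The paper's resolution is to exploit that $I_j$ is necessarily \emph{connected}: Lemma~\ref{lemma:S connected} proves (a) for connected sets of size at least $\ln(n)$ via a union bound over the $s^{s-2}$ spanning trees (Cayley's formula) rather than over all $\binom{n}{s}$ sets, and Lemma~\ref{lemma:N(v)} supplies (b); together they give the $(1\pm C\ln^{-1/4}(n))2j$ rate and the $O(\ln^{3/4}(n))$ error term. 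Some device of this kind is indispensable and is absent from your proposal. Two minor further points: for $k\le\ln(n)$ the rate is only $\Theta(k)$, not $(2-o(1))k$ (harmless, as those terms contribute $O(\ln\ln(n))$); and your concentration step, like the paper's, controls $T$ around $\mathbb{E}[T^+]$ rather than around $\mathbb{E}[T]$, so deducing the stated $T\le\mathbb{E}[T]+O(\ln^{3/4}(n))$ additionally requires matching \emph{upper} bounds on the rates, in particular an upper bound on the push contribution in the finishing phase.
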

This result indicates an important property of the asynchronous push-pull algorithm that has not been studied in such detail in previous works: the robustness of the algorithm with respect to the distribution of the edges and the average degree of the underlying networks. 

Theorem~\ref{MainThm2} above is a corrected version of Theorem~2 in the conference version of this paper~\cite{PS2013}. The bound on $T(G_{n,p})$ is slightly worse in the corrected version, however it is still independent of $p$. To quantify this robustness we performed numerical simulations in Section~\ref{sec:numsim}.
We find that the spreading time of asynchronous push significantly increases as $c$ decreases, while the spreading time of asynchronous pull and push-pull remain largely unaffected. In fact, in Section~\ref{sec:numsim} we will see that asynchronous pull w.h.p.\ spreads a rumor in $2\ln(n) + O(\ln^{3/4}(n))$ irrespectively of the value of $c$. We also find that push-pull is nearly twice as fast as pull, suggesting a significant contribution of push in the spreading process.


We also study the robustness of the algorithm with respect to other parameters. First, suppose that every time a node contacts some other node the connection is dropped independently of the history of the process with probability $1 - q$, where $0 < q \le 1$, before any information can be exchanged. Let $T_q(G)$ be the time until all nodes receive the information in the asynchronous push-pull protocol. Our next result quantifies the effect of the ``success probability'' $q$ on the total time.

\begin{proposition}
\label{MainThm3}
Let $0 < q \le 1$. Then, for any connected graph $G$,
$$
\mathbb{E}[T_q(G)] = \frac{1}{q}~\mathbb{E}[T(G)].
$$
Moreover, w.h.p.\ 
\begin{alignat*}{2}
	T_q(G_{n,p}) &= \frac{1}{q}\mathbb{E}[T(G_{n,p})] + O(\alpha(n)^{-1/2}\ln(n)+1) & \quad ~\textrm{if}~ & p = \alpha(n)\ln(n)/n, \\
	T_q(G_{n,p}) &= \frac{1}{q}\mathbb{E}[T(G_{n,p})] + O(\ln^{3/4}(n)) & \quad ~\textrm{if}~ & p = c\ln(n)/n, c>1.
\end{alignat*}
\end{proposition}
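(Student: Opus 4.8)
The plan is to prove Proposition~\ref{MainThm3} by exploiting a clean \emph{time-rescaling} relationship between the protocol with success probability $q$ and the original protocol. The key observation is that thinning a rate~$1$ Poisson process of contact attempts, where each attempt independently succeeds with probability $q$, produces a rate~$q$ Poisson process of \emph{successful} contacts. Equivalently, the successful contacts in the lossy protocol have exactly the same law as the contacts in the original protocol, but with every waiting time scaled by a factor $1/q$. Since the rumor spreads only through successful contacts, the entire dissemination process with failures is a deterministic time-rescaling (by $1/q$) of the failure-free process, run on the same graph with the same underlying randomness for \emph{which} neighbor is contacted. This couples $T_q(G)$ and $T(G)$ so that $T_q(G) \overset{d}{=} q^{-1} T(G)$ as random variables.

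Granting this coupling, the first claim follows immediately by taking expectations: $\mathbb{E}[T_q(G)] = q^{-1}\,\mathbb{E}[T(G)]$ for every connected graph $G$, with no probabilistic error term, exactly as stated. First I would make the Poisson-thinning argument precise: attach to each node an independent rate~$1$ Poisson clock whose ticks are labelled ``success'' with probability $q$ and ``failure'' with probability $1-q$ independently; the success-ticks form a rate~$q$ process, and an exponential interarrival of rate~$q$ equals in distribution $1/q$ times an exponential of rate~$1$. Pairing up the success-ticks of the lossy protocol with all ticks of the original protocol gives the required distributional identity on the joint trajectory, hence on the hitting time at which the last node becomes informed.

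For the concentration statements I would simply transport the bounds already established for $T(G_{n,p})$. Multiplying the high-probability estimates of Theorem~\ref{MainThm1} and Theorem~\ref{MainThm2} through by $q^{-1}$, and using that $q \le 1$ is a fixed constant (so $q^{-1} = O(1)$ and the error terms are only inflated by a constant factor), yields
\[
T_q(G_{n,p}) = q^{-1} T(G_{n,p}) = q^{-1}\mathbb{E}[T(G_{n,p})] + O(\alpha(n)^{-1/2}\ln(n)+1)
\]
in the dense regime $p=\alpha(n)\ln(n)/n$, and the analogous bound with error $O(\ln^{3/4}(n))$ in the sparse regime $p=c\ln(n)/n$, $c>1$. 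Here I would note that $q^{-1}\mathbb{E}[T(G_{n,p})]$ is the correct centering because the rescaling acts on the raw time $T(G_{n,p})$, not on its expectation over the rumor-spreading randomness.

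I expect the only genuine subtlety to be formalizing the coupling so that the error terms are preserved under rescaling rather than naively argued; in particular one must be careful that w.h.p.\ in the original theorems refers to the product of the graph space and the rumor-spreading space, and the thinning only modifies the latter while leaving the graph fixed, so the graph-dependent high-probability events (e.g.\ edge-uniformity) are untouched. The probabilistic content is essentially exhausted by the thinning identity; everything else is bookkeeping, so there is no substantial obstacle beyond stating the coupling cleanly.
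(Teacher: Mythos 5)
Your proposal is correct, but it takes a genuinely different route from the paper. The paper proceeds locally: its Lemma~\ref{Masterlemma2} recomputes the rate of the transition out of state $j$ and finds that $t_j$ is exponential with parameter $q\cdot Q_j$ (with $Q_j$ as in Lemma~\ref{Masterlemma}), from which the expectation identity follows by linearity; the concentration bounds are then obtained by literally repeating the stochastic-domination and moment-generating-function arguments of Sections~\ref{sec:pthm1thm2} and~\ref{sec:pt3t4} with every rate multiplied by $q$. You instead establish a single global coupling via Poisson thinning: the successful contacts form independent rate-$q$ marked Poisson processes, so the whole informed-set trajectory of the lossy protocol is the original trajectory run at speed $q$, giving the exact distributional identity $T_q(G)\overset{d}{=}q^{-1}T(G)$ conditionally on the graph. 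This is stronger than what the paper extracts (marginal exponentiality of each $t_j$ does not by itself yield a joint identity, since the $t_j$ are dependent), and it makes the transfer of the w.h.p.\ bounds from Theorems~\ref{MainThm1} and~\ref{MainThm2} immediate rather than requiring the concentration machinery to be rerun; your observation that the thinning only touches the rumor-spreading coordinate of the product space, leaving the graph-dependent events intact, is exactly the point one must make for that transfer to be legitimate. What the paper's route buys in exchange is that it stays entirely within the $Q_j$-framework already set up and would survive modifications (e.g.\ state- or edge-dependent failure probabilities) under which no global time-rescaling exists. The one detail worth spelling out in your write-up is that a dropped contact is a strict no-op, so the trajectory is a deterministic functional of the marked point process of \emph{successful} ticks alone --- that is what licenses the identification $I^{q}(t)=I(qt)$ and hence $T_q(G)=T(G)/q$ under the coupling.
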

Note that $1/q$ is exactly the expected number of connection attempts that have to be made until the connection is not dropped for the first time. Therefore, this result also demonstrates the robustness and the adaptivity of the asynchronous push-pull algorithm that essentially slows down at the least possible rate. 

On $G_{n,p}$ with $p = \alpha(n)\ln(n)/n$, for some $\alpha(n) = \omega(1)$, it is known that under the \emph{synchronous} push protocol, all nodes are informed in $(1/q+1/\ln(1+q)) \ln(n)$ rounds w.h.p., where lower order terms are omitted~\cite{FHP10}. We can show analogous results to Theorem~\ref{MainThm1} and Proposition~\ref{MainThm3} for the \emph{asynchronous} push protocol, and obtain that the time to inform all nodes is given by $2/q \ln(n)$ w.h.p., again omitting lower order terms. This is faster than its synchronous counter part and, in particular, the speed difference increases with larger $q$.

Finally, we study the robustness with respect to node failures. Suppose that in the given network a random subset $B$ of nodes is declared ``faulty'', in the sense that even if they receive the rumor, they will neither perform any push operation, nor will they respond to any pull request. Let $T_B(G)$ denote the time until the information has spread to all nodes. Our next result states that $T_B(G_{n,p})$ is asymptotically equal to $T(G_{n,p})$, provided that $B$ is not linear in $n$ and the initially informed node, which is fixed and cannot be chosen after the graph has been sampled, does not fault.
\begin{proposition}
\label{MainThm4}
Let $p$ be as in Theorem~\ref{MainThm1} or Theorem~\ref{MainThm2}, and suppose that~$B = o(n)$. Then w.h.p.
\[
	T_B(G_{n,p}) = (1 + o(1))\mathbb{E}[T(G_{n,p})].
\]
\end{proposition}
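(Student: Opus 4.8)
The plan is to establish matching asymptotic lower and upper bounds on $T_B(G_{n,p})$. The lower bound is immediate by coupling: a faulty node may still receive the rumor but never pushes nor answers a pull, so running the faulty and the ordinary (fault-free) processes on the same graph with the same Poisson clocks keeps the informed set of the faulty process inside that of the fault-free process at all times. Hence $T_B(G_{n,p}) \ge T(G_{n,p})$, and the concentration statements of Theorems~\ref{MainThm1} and~\ref{MainThm2} give $T(G_{n,p}) \ge (1-o(1))\mathbb{E}[T(G_{n,p})]$ w.h.p. The work is therefore in the matching upper bound, which I would split into the time until all good (non-faulty) nodes are informed, and the additional time to inform the remaining faulty nodes.

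For the good nodes, observe that they never receive the rumor through a faulty node, so the dynamics restricted to the good nodes is exactly asynchronous push--pull on the induced subgraph $H$ on $[n]\setminus B$, except that a push or pull aimed by a good node at a faulty neighbour is wasted. Since $|B| = o(n)$, w.h.p.\ every good node $v$ has $b_v := |N(v)\cap B| = o(np)$ faulty neighbours against degree $d(v) = (1\pm o(1))np$, so its useful (good-directed) attempts form a thinned Poisson process of rate $1 - b_v/d(v) \ge 1-\delta_n$ with $\delta_n = o(1)$, hitting a uniform good neighbour; slowing all clocks to the common rate $1-\delta_n$ dominates, giving all good nodes informed within $(1-\delta_n)^{-1}T(H)$. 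As $H$ is distributed as $G_{n',p}$ with $n' = (1-o(1))n$ and the same $p$, its spreading time is within a $1+o(1)$ factor of that of $G_{n,p}$: in the regime of Theorem~\ref{MainThm1} this is transparent from the explicit leading term $H_{n'-1} = (1+o(1))H_{n-1}$, while in the regime of Theorem~\ref{MainThm2} I would instead couple the two processes directly on $G_{n,p}$ --- neutralising the faulty nodes while speeding up the good clocks by a factor $1+o(1)$ to compensate for the $o(1)$-fraction of relay power lost --- to conclude $T_{\mathrm{good}} \le (1+o(1))T(G_{n,p}) = (1+o(1))\mathbb{E}[T(G_{n,p})]$ w.h.p.

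For a faulty node $u$, w.h.p.\ it has $(1-o(1))np$ good neighbours, each of which pushes to $u$ at rate $(1\pm o(1))/(np)$ once informed (and $u$ may additionally pull from an informed good neighbour), so the instantaneous rate at which $u$ becomes informed is $\Theta(f_u(t))$, where $f_u(t)$ is the fraction of $u$'s good neighbours already informed. The decisive observation is that during the early phase of the protocol the informed fraction grows exponentially, so --- using the uniform edge distribution of $G_{n,p}$ to guarantee that $f_u(t)$ tracks the global informed fraction simultaneously for all $u$ --- the hazard integral $\int_0^t \Theta(f_u)$ itself grows exponentially, and an unusually large value $E_u$ of the governing unit-exponential delays $u$ by only about $\tfrac12\ln E_u$ beyond the time at which a constant fraction of the graph is informed. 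A union bound over the $|B| = o(n)$ faulty nodes gives $\max_u E_u = O(\ln|B|)$, hence a worst-case lag of only $O(\ln\ln n)$, which is $o(\mathbb{E}[T(G_{n,p})])$. Thus all faulty nodes are informed by $(1+o(1))\mathbb{E}[T(G_{n,p})]$ as well.

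I expect the last phase to be the main obstacle, for two reasons. First, one must show that $f_u(t)$ is uniformly (over all faulty $u$) close to the global informed fraction throughout the exponential-growth phase; this rests on the expansion and uniform-edge-count properties of $G_{n,p}$, which are strong under Theorem~\ref{MainThm1} but fragile near the connectivity threshold of Theorem~\ref{MainThm2} and will require the most care. Second, in the near-threshold regime the reduction for the good nodes cannot simply invoke the one-sided bound of Theorem~\ref{MainThm2} for $G_{n',p}$, since that would compare $T_{\mathrm{good}}$ to an upper estimate rather than to the true $\mathbb{E}[T(G_{n,p})]$; the direct coupling sketched above is needed, and its delicate point is the very early phase, when few good nodes are informed and the $o(1)$-fraction relay-power argument is weakest. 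By contrast, the thinning-and-rescaling step and the underlying degree and subgraph-distribution estimates are routine.
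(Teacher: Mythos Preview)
The paper's own argument is far terser than yours: it occupies two sentences, observing that since $B$ is a random set of size $o(n)$ the subgraph of $G_{n,p}$ induced on the non-faulty nodes is distributed as $G_{n',p}$ with $n'=(1-o(1))n$, and then declaring that ``the results in the previous sections apply'' with $n$ replaced by $n'$. It does not address the two issues you single out --- that good nodes still pick neighbours in the full graph (so a $b_v/d(v)$ fraction of their calls are wasted and the dynamics on $[n]\setminus B$ is \emph{not} literally push--pull on $G_{n',p}$), and that the faulty nodes must themselves be reached. Your decomposition into a coupling lower bound, a thinning-corrected good-node phase on $G_{n',p}$, and a separate faulty-node lag analysis is therefore a substantially more complete route, even though it shares the paper's core reduction. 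In particular, your observation that the faulty nodes must be caught \emph{during} the exponential-growth phase (giving an $O(\ln\ln n)$ lag) rather than afterwards is essential: waiting until all good nodes are informed and then pushing to each faulty $u$ at constant rate would cost $\Theta(\ln|B|)$, which for $|B|$ polynomial in $n$ is \emph{not} $o(\mathbb{E}[T(G_{n,p})])$.

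One genuine gap in your sketch: for the lower bound you invoke ``the concentration statements of Theorems~\ref{MainThm1} and~\ref{MainThm2}'' to obtain $T(G_{n,p})\ge(1-o(1))\mathbb{E}[T(G_{n,p})]$ w.h.p., but Theorem~\ref{MainThm2} (and the supporting Corollary~\ref{sparse: T(G)}) proves only the \emph{upper} tail $\Pr[T-\mathbb{E}[T]>t]$; no lower-tail bound is established anywhere in the paper for the sparse regime. So in that regime neither the paper's brief argument nor yours actually delivers the two-sided estimate as stated --- what both really justify is the upper bound $T_B(G_{n,p})\le(1+o(1))\mathbb{E}[T(G_{n,p})]$. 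If you want the matching lower bound near threshold you would need to supply a lower-tail analogue of Corollary~\ref{sparse: T(G)}, which the paper does not provide.
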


\subsection{Related Work}
\label{sec: related work}

There are many theoretical studies that are concerned with the performance of the \emph{synchronous} push-pull algorithm~\cite{CLP10-2,CLP10,DFF11,FPS12,Gia11,KSSV00}. For example, the performance of rumor spreading on general graph topologies was made explicit in \cite{CLP10-2,CLP10,Gia11}, where the number of rounds necessary to spread a rumor was related to the conductance of the graph. In particular, the upper bound $O(\varphi^{-1}\ln(n))$ was shown, where $\varphi$ is the conductance of the graph.

It is also known that push-pull is efficient on many classes of random graphs. For example, on classical preferential attachment graphs \cite{BA99} it was shown in \cite{DFF11} that w.h.p.\ a rumor spreads in $\Theta(\ln(n))$ rounds. Moreover, the performance of the push-pull algorithm on classes of random graphs with a degree sequence that is a power law was studied in \cite{FPS12}. In particular, they showed that if the degree sequence has unbounded variance, then the number of rounds until the information has reached almost all nodes is reduced to $O(\ln\ln(n))$, while in all other cases it remains $O(\ln(n))$. 

For the synchronous push protocol there exist very accurate bounds of the spreading time~\cite{FHP10,FG85,Pit87}. In \cite{Pit87} it was shown that a rumor spreads in $\log_2(n)+\ln(n)+O(1)$ rounds w.h.p.\ on the complete graph. In \cite{FHP10} it was shown that for $G_{n,p}$ with $p=\omega(\ln(n)/n)$ a rumor spreads w.h.p.\ in $\log_2(n)+\ln(n)+o(\ln(n))$ rounds. Furthermore, it was recently shown that the spreading time increases significantly when approaching the connectivity threshold more closely, namely that for $G_{n,p}$ with $p=c\ln(n)/n$, $c>1$, the spreading time is w.h.p. $\log_2(n)+\gamma(c)\ln(n)+o(\ln(n))$, where $\gamma(c)=c\ln(c/(c-1))$~\cite{PPSS}. This means in particular that the spreading time cannot be bounded by logarithmic time independent of the edge probability $p$; this is in contrast to the asynchronous push-pull algorithm, which is more robust with respect to variations in the average degree, cf.\ Theorem~\ref{MainThm2}. On random regular graphs with degree $d \ge 3$, the spreading time has been shown to equal $(1+2/d)\gamma(d)\ln(n)+o((\ln\ln(n))^2)$ w.h.p.~\cite{Panagiotou2013}.

In addition, the effect of transmission failures that occur independently for each contact at some constant rate $1-q$ was investigated in \cite{FHP10} for the synchronous push protocol on dense random graphs. There, it was shown that the total time increases under-proportionally, namely that it is w.h.p. equal to $(1/q+1/\ln(1+q)) \ln n + o(\ln n)$~\cite{FHP10}. While the total time increases proportionally, i.e., by a factor of $1/q$, in the asynchronous case, the asynchronous protocol is still faster and the speed difference increases with larger $q$.

For the asynchronous push-pull protocol there exists much less literature~\cite{DFF12,FPS12}, mostly devoted to models for scale-free networks. In \cite{DFF12} it was shown that on preferential attachment graphs a rumor needs a time of $O(\sqrt{\ln(n)})$ w.h.p.\ to spread to almost all nodes. On power-law Chung-Lu random graphs \cite{CL03} (for power-law exponent $2<\beta<3$) it was shown in \cite{FPS12} that a rumor initially located within the giant component spreads w.h.p.\ even in constant time to almost all nodes. Related to the asynchronous push-pull protocol is first-passage percolation, which, on regular graphs, is an equivalent process. There, it has been shown that the running time on the hypercube and the complete graph is $\Theta(\ln(n))$~\cite{Bollobas1997book,Fill1993,Janson1999}. In a recent study, the ratio of the spreading time of the synchronous and asynchronous push-pull protocol was bounded by $\Omega(1/\ln(n))$ from below and by $O(n^{2/3})$ from above~\cite{Acan_arXiv}. In particular, examples of graphs in which the asynchronous version spreads the rumor in logarithmic time and the synchronous version needs polynomial time were given.


\section{Preliminaries}
\label{sec:prel}

We will exploit the following properties of $G_{n,p}$. Let us begin with the case $p=\alpha(n)\ln(n)/n$, $\alpha(n)=\omega(1)$. Here we will use the fact that the edges are w.h.p.\ distributed very ``uniformly'', in the sense that between any $S \subseteq [n]$ and its complement the number $e(S, [n] \setminus S)$ is very close to its expected value $p(n-\vert S \vert) \vert S \vert$.
\begin{lemma}[see e.g.\ Lemma IV.3 in \cite{FHP10}]
\label{lemma:dense}
Let $p = { \alpha(n) \ln(n)}/{n}$, where $\alpha(n)  = \omega(1)$. Then, w.h.p. $G_{n,p}$ is such that for all $S \subseteq [n]$
\begin{equation}
\label{property}
e(S,[n] \setminus S)= \left(1 \pm \sqrt{\frac{8}{\alpha(n)}} \right ) \left(n-\vert S \vert \right)\vert S \vert p.
\end{equation}
\end{lemma}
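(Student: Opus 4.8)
The plan is to prove this concentration result via a union bound over all subsets $S \subseteq [n]$, applying a Chernoff bound to each fixed $S$ and showing the error probability is small enough to survive the union over the $2^n$ subsets. For a fixed $S$ with $|S|=s$, the quantity $e(S,[n]\setminus S)$ is a sum of independent indicator random variables (one per potential edge between $S$ and its complement), so it is binomially distributed with parameters $s(n-s)$ and $p$, and hence has expectation $\mu = s(n-s)p$.

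**First I would** fix $S$ with $|S|=s$ and apply the standard Chernoff bound for the upper and lower tails of a binomial: for $\delta \in (0,1)$,
\[
\Pr\bigl[|e(S,[n]\setminus S) - \mu| \ge \delta \mu\bigr] \le 2\exp\!\left(-\frac{\delta^2 \mu}{3}\right).
\]
Here I would set $\delta = \sqrt{8/\alpha(n)}$, so that $\delta^2 = 8/\alpha(n)$ and the deviation interval matches exactly the claim~\eqref{property}. Plugging in $\mu = s(n-s)p$ with $p = \alpha(n)\ln(n)/n$, the exponent becomes
\[
\frac{\delta^2 \mu}{3} = \frac{8}{\alpha(n)} \cdot \frac{s(n-s)\alpha(n)\ln(n)}{3n} = \frac{8\, s(n-s)\ln(n)}{3n},
\]
so the failure probability for this single $S$ is at most $2\exp\bigl(-\tfrac{8}{3}\tfrac{s(n-s)}{n}\ln(n)\bigr) = 2\, n^{-\frac{8}{3}\frac{s(n-s)}{n}}$.

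**The main obstacle** is the union bound: there are $\binom{n}{s}$ sets of size $s$, and $\binom{n}{s} \le n^s$, so I need the per-set failure probability to beat this factor. Summing over all $s$, the total failure probability is bounded by
\[
\sum_{s=1}^{n-1} \binom{n}{s}\, 2\, n^{-\frac{8}{3}\frac{s(n-s)}{n}} \le 2\sum_{s=1}^{n-1} n^{s - \frac{8}{3}\frac{s(n-s)}{n}}.
\]
The exponent is $s\bigl(1 - \tfrac{8}{3}\tfrac{n-s}{n}\bigr)$, and by symmetry in $s \leftrightarrow n-s$ it suffices to control $s \le n/2$. For such $s$ we have $\tfrac{n-s}{n}\ge \tfrac12$, so the exponent is at most $s\bigl(1-\tfrac{4}{3}\bigr) = -\tfrac{s}{3}$, giving a term bounded by $n^{-s/3}$; summing the geometric-type series over $s$ yields $o(1)$. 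The delicate points are that the Chernoff bound as stated requires $\mu$ not too small (handled since the smallest sets, $s = O(1)$, still give $\mu = \Theta(\ln n) \cdot \alpha(n) \to \infty$), and that $\delta = \sqrt{8/\alpha(n)} \to 0$ lies in $(0,1)$ for large $n$, which is exactly why the hypothesis $\alpha(n)=\omega(1)$ is needed. Since the statement cites Lemma~IV.3 of~\cite{FHP10}, I would finally remark that the result follows directly from that reference, with the above Chernoff-plus-union-bound computation being the underlying argument.
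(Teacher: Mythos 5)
The paper does not prove this lemma itself --- it simply cites Lemma~IV.3 of \cite{FHP10} --- but your Chernoff-plus-union-bound argument is the standard proof of that result and is correct; it is also exactly the machinery (the Chernoff bound \eqref{Chernoff} combined with a union bound over set sizes) that the paper deploys for the analogous Lemma~\ref{lemma:S connected}. Your bookkeeping checks out: with $\delta=\sqrt{8/\alpha(n)}\in(0,1)$ for large $n$, the per-size failure probability for $s\le n/2$ is at most $2n^{s}\cdot n^{-\frac{8}{3}\frac{s(n-s)}{n}}\le 2n^{-s/3}$, which sums to $o(1)$, and the symmetry of $e(S,[n]\setminus S)$ under complementation disposes of $s>n/2$.
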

For the case that $p={c\ln(n)}/{n}$, where $c>1$, the conclusion of the previous lemma is not true; actually, there are significant fluctuations in the quantity $e(S, [n]\setminus S)$ for different sets $S$ of the same size. Instead, we will exploit the following properties. First, we use that w.h.p.\ the degree of any node is of logarithmic order.
\begin{lemma}[see e.g.~\cite{MR1864966}]
\label{lemma:deg(v)}
Let $c > 1$ and $p={c\ln(n)}/{n}$. Then there are constants $C'$ and $C$ depending on $c$ with $C' > C > 0$, such that w.h.p.\ for any $v\in [n]$ we have
$$C \ln(n) \le d(v) \le C' \ln(n).$$
\end{lemma}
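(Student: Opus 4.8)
The plan is to exploit the fact that the degree of any fixed vertex $v$ is exactly binomially distributed, $d(v) \sim \mathrm{Bin}(n-1, p)$ with mean $\mu := (n-1)p = (1+o(1))\,c\ln(n)$, and then to apply Chernoff bounds together with a union bound over all $n$ vertices. Since the target failure probability for each individual vertex must be $o(1/n)$ in order to survive the union bound, the whole argument reduces to choosing the constants $C'$ and $C$ so that the resulting tail exponents exceed $\ln(n)$.

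For the upper bound I would use the multiplicative Chernoff bound in its sharp form $\Pr[d(v) \ge (1+\delta)\mu] \le (e^\delta/(1+\delta)^{1+\delta})^\mu$. Writing $(1+\delta)\mu = C'\ln(n)$ turns the exponent into $-\ln(n)\,g(C')$ with $g(C') = C'\ln(C'/c) - C' + c$. Since $g(c)=0$ and $g(C') \to \infty$ as $C' \to \infty$, any sufficiently large constant $C'$ (depending on $c$) yields $g(C') > 1$, so that $\Pr[d(v) \ge C'\ln(n)] \le n^{-g(C')(1+o(1))} = o(1/n)$, and the union bound gives the upper bound for all vertices w.h.p.

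The lower bound is the delicate part. Applying the corresponding sharp lower-tail bound $\Pr[d(v) \le (1-\delta)\mu] \le (e^{-\delta}/(1-\delta)^{1-\delta})^\mu$ with $(1-\delta)\mu = C\ln(n)$ produces the exponent $-\ln(n)\,f(C)$, where $f(C) = c - C + C\ln(C/c)$ for $0 < C < c$. The main obstacle is that the more commonly used simplification $\delta + (1-\delta)\ln(1-\delta) \ge \delta^2/2$ is far too lossy here: it only yields an exponent $(1-C/c)^2 c/2$, which cannot exceed $1$ when $c \le 2$, no matter how small $C$ is chosen. One must therefore retain the exact expression $f(C)$. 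Observing that $f(c) = 0$ while $\lim_{C \to 0^+} f(C) = c > 1$ and that $f$ is continuous on $(0,c)$, I can fix a constant $C \in (0,c)$ small enough that $f(C) > 1$; then $\Pr[d(v) \le C\ln(n)] \le n^{-f(C)(1+o(1))} = o(1/n)$.

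Combining the two estimates, a union bound over the $n$ vertices shows that with probability $1 - o(1)$ every vertex satisfies $C\ln(n) \le d(v) \le C'\ln(n)$ simultaneously, which is the claim. The only genuine care needed is in the lower bound near the connectivity threshold, where the sharp Chernoff exponent $f(C)$ — rather than its quadratic relaxation — must be used to guarantee $f(C) > 1$.
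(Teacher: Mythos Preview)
Your argument is correct and is precisely the standard one. The paper itself does not prove this lemma at all; it merely cites the reference~\cite{MR1864966} for it. Your Chernoff-plus-union-bound computation, retaining the sharp exponents $g(C')=C'\ln(C'/c)-C'+c$ and $f(C)=c-C+C\ln(C/c)$ rather than their quadratic relaxations, is exactly how one establishes such degree bounds near the connectivity threshold, and your observation that the cruder $\delta^2/2$ estimate fails for $c\le 2$ is the only genuinely nontrivial point. There is nothing to compare against in the paper.
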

The next properties give us (coarse) information about $e(S)$ and $e(S,[n] \setminus S)$ for $S \subseteq [n]$ in the spirit of Lemma~\ref{lemma:dense}. In particular, if $|S|$ is not too large, then w.h.p. the average degree of the subgraph induced by $S$ is exponentially smaller compared to the average degree of $G_{n,p}$.
\begin{lemma}
\label{lemma:e(S,[n]-S)}
Let $c > 1$ and $p={c\ln(n)}/{n}$. Then w.h.p.\ 
\begin{alignat*}{2}
	e(S,[n] \setminus S) &
 		= \Theta(\vert S \vert \ln(n) ) &
 	&	\quad \textrm{for any } S \subseteq [n] \textrm{ with } \vert S \vert \leq {n}/{2}, \\  
	e(S) &
		\le \vert S \vert \ln\ln(n) &
	&	\quad \textrm{for any } S \subseteq [n] \textrm{ with } \vert S \vert \leq {n}/\ln(n).
\end{alignat*}
\end{lemma}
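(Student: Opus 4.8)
The plan is to derive both bounds by combining the degree estimates of Lemma~\ref{lemma:deg(v)} with first-moment (union-bound) arguments, handling small and large sets $S$ separately. Throughout I would condition on the w.h.p.\ event of Lemma~\ref{lemma:deg(v)} that every vertex satisfies $C\ln n\le d(v)\le C'\ln n$, and I would repeatedly exploit the identity $e(S,[n]\setminus S)=\sum_{v\in S}d(v)-2e(S)$. The upper bound $e(S,[n]\setminus S)=O(|S|\ln n)$ is then immediate and uniform in $S$: since every edge leaving $S$ is incident to a vertex of $S$, we have $e(S,[n]\setminus S)\le\sum_{v\in S}d(v)\le C'|S|\ln n$. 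So the real work lies in (i) the upper bound on $e(S)$ and (ii) the matching lower bound $e(S,[n]\setminus S)=\Omega(|S|\ln n)$.

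For (i) I would fix $s=|S|\le n/\ln n$ and use that $e(S)\sim\mathrm{Bin}(\binom s2,p)$ has mean at most $s^2p/2$. Applying the tail estimate $\Pr[\mathrm{Bin}(m,p)\ge k]\le(emp/k)^k$ with $m=\binom s2$ and $k=s\ln\ln n$ gives $\Pr[e(S)\ge s\ln\ln n]\le(ecs\ln n/(2n\ln\ln n))^{s\ln\ln n}$. A union bound over the $\binom ns\le(en/s)^s$ choices of $S$, followed by extracting the $s$-th root, leaves a per-set contribution of at most $(en/s)\,(ecs\ln n/(2n\ln\ln n))^{\ln\ln n}$; using $s\le n/\ln n$, so that $s\ln n/n\le 1$, this is bounded by $e\ln n\,(ec/(2\ln\ln n))^{\ln\ln n}\to0$. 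Hence each size $s$ contributes at most $q_n^s$ with $q_n\to0$, and summing the resulting geometric series over $1\le s\le n/\ln n$ shows that $e(S)\le s\ln\ln n$ holds simultaneously for all such $S$ w.h.p.

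For (ii) I would split the range of $s$. For $s\le n/\ln n$ I combine the identity with (i) and Lemma~\ref{lemma:deg(v)}: $e(S,[n]\setminus S)\ge Cs\ln n-2s\ln\ln n=(1-o(1))Cs\ln n=\Omega(s\ln n)$, since $\ln\ln n=o(\ln n)$. For $n/\ln n<s\le n/2$ a direct Chernoff argument suffices, because now $e(S,[n]\setminus S)\sim\mathrm{Bin}(s(n-s),p)$ has mean $\mu\ge(c/2)s\ln n$, so that $\Pr[e(S,[n]\setminus S)\le\mu/2]\le\exp(-\mu/8)\le n^{-(c/16)s}$; a union bound over the $\binom ns\le(en/s)^s\le(e\ln n)^s$ sets yields a per-set factor at most $e\ln n\cdot n^{-c/16}\to0$, giving $e(S,[n]\setminus S)\ge\mu/2=\Omega(s\ln n)$ w.h.p.\ for all these $S$. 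Intersecting the finitely many w.h.p.\ events above completes both claims.

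The main obstacle is that the naive edge-expansion route — bounding $e(S,[n]\setminus S)$ directly by Chernoff for all $s$ — breaks down for small sets when $c$ is close to $1$: the union-bound factor $\binom ns\approx n^s$ overwhelms the concentration, whose exponent is only $\Theta(cs\ln n)$. The resolution is to route the lower bound for small $S$ through the degree lower bound minus the internal edges, which is exactly why establishing $e(S)\le|S|\ln\ln n$ is the crux; the cut-off at $n/\ln n$ is chosen so that $\ln(n/s)=O(\ln\ln n)$ keeps the union bound affordable there, while for larger $s$ the mean $\mu$ is large enough that crude counting already works.
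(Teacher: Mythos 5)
Your proof is correct. Note first that the paper itself does not prove this lemma: it imports both statements from Cooper and Frieze~\cite{CF07} (Property~3 and its appendix), so there is no in-paper argument to compare against line by line. Your self-contained first-moment proof follows what is essentially the standard route for such edge-distribution statements, and all the estimates check out: the bound $\Pr[\mathrm{Bin}(m,p)\ge k]\le(emp/k)^k$ with $k=s\ln\ln n$, the extraction of the $s$-th root, and the observation that $x^{\ln\ln n-1}\le 1$ for $x=s\ln n/n\le 1$ together give a per-size factor $q_n=e\ln n\,(ec/(2\ln\ln n))^{\ln\ln n}\to 0$, and the geometric series closes part (i); the identity $\sum_{v\in S}d(v)=2e(S)+e(S,[n]\setminus S)$ combined with Lemma~\ref{lemma:deg(v)} then handles both the uniform upper bound and the lower bound for $s\le n/\ln n$, and the crude Chernoff-plus-union-bound argument suffices for $n/\ln n<s\le n/2$ because $\binom{n}{s}\le(e\ln n)^s$ there. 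One small remark on your closing commentary: the direct route is not quite as hopeless as you suggest. The sharp lower-tail bound $\Pr[X\le\delta\mu]\le e^{-\mu(1-\delta+\delta\ln\delta)}$ tends to $e^{-\mu}\approx n^{-cs}$ as $\delta\to 0$, which beats $\binom{n}{s}\le n^s$ for any $c>1$ and hence yields $e(S,[n]\setminus S)=\Omega(s\ln n)$ directly for all $s\le n/2$ (with a constant degenerating as $c\downarrow 1$); it is only your chosen weaker bound $e^{-\mu/8}$ that fails for small $c$. Of course, the $e(S)\le|S|\ln\ln n$ bound is needed in the paper for other purposes (Corollary~\ref{sparse:tj}), so your detour through it costs nothing.
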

The first statement can be found e.g.\ in~\cite{CF07}, Property 3, while the second statement is shown within the proof of Property 3 that is provided in the Appendix of the same paper. For sets $S$ with size exceeding $\ln(n)$ we will require stronger bounds. The next statement addresses \emph{connected} sets $S$, i.e., where the subgraph of $G_{n,p}$ induced by $S$ is connected.
\begin{lemma}
\label{lemma:S connected}
Let $c > 1$ and $p = {c\ln(n)}/{n}$. Then w.h.p.\ the graph is such that all connected sets $S \subseteq [n]$ with $\ln(n) \leq \vert S \vert \leq {n}/2$ fulfill
\[
e(S,[n] \setminus S) = (1 \pm \eps(n)) \, \vert S \vert (n-\vert S \vert)p,
\]
where $\eps(n) = \left(\frac{24\ln\ln(n)}{\ln(n)}\right)^{1/2}$.
\end{lemma}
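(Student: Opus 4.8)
The plan is to fix the size $s := \vert S \vert$ with $\ln(n) \le s \le n/2$, bound the probability that a \emph{fixed} set of this size is simultaneously connected and has an atypical boundary, and then sum over all such sets and all admissible $s$. The essential point is that the restriction to \emph{connected} sets is what makes a union bound feasible: a naive union bound over all $\binom{n}{s}$ subsets of size $s$ would fail near $s \approx \ln(n)$, but the connectivity constraint contributes a tiny probability that absorbs the binomial coefficient.

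First I would treat the boundary. For a fixed $S$ with $\vert S \vert = s$, the quantity $e(S,[n]\setminus S)$ is a sum of $s(n-s)$ independent edge indicators, hence binomially distributed with mean $\mu = s(n-s)p$. Since $s \le n/2$ forces $\mu \ge cs\ln(n)/2$, a standard Chernoff bound with relative deviation $\eps(n)$ gives
\[
\Pr\!\left[\,\vert e(S,[n]\setminus S) - \mu \vert > \eps(n)\mu\,\right] \le 2\exp\!\left(-\tfrac{1}{3}\eps(n)^2 \mu\right) \le 2(\ln n)^{-4cs},
\]
where the value $\eps(n)^2 = 24\ln\ln(n)/\ln(n)$ is chosen precisely so that $\tfrac{1}{3}\eps(n)^2\mu$ becomes $4cs\ln\ln(n)$.

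Next I would exploit a crucial independence. Connectivity of the subgraph induced by $S$ depends only on the edges inside $S$, while the boundary $e(S,[n]\setminus S)$ depends only on the edges between $S$ and $[n]\setminus S$; these two edge sets are disjoint, so the events ``$S$ connected'' and ``boundary atypical'' are independent. By Cayley's formula there are $s^{s-2}$ spanning trees on a fixed vertex set of size $s$, and a fixed tree is present with probability $p^{s-1}$, so a union bound over trees yields $\Pr[\,G[S]\ \text{connected}\,] \le s^{s-2}p^{s-1}$. Combining this with the Chernoff estimate and $\binom{n}{s} \le (en/s)^s$, the expected number of connected sets of size $s$ with bad boundary is at most
\[
\binom{n}{s}\, s^{s-2} p^{s-1} \cdot 2(\ln n)^{-4cs} \le \frac{2n}{c\, s^2 \ln(n)}\, \bigl(ec\,(\ln n)^{1-4c}\bigr)^{s}.
\]
Since $c > 1$, the base $ec\,(\ln n)^{1-4c}$ tends to $0$, so the terms decay geometrically in $s$ and the sum over $\ln(n) \le s \le n/2$ is dominated by $s = \ln(n)$, where the bound is of order $n^{1+\ln(ec)}(\ln n)^{(1-4c)\ln(n)} = n^{1 + \ln(ec) - (4c-1)\ln\ln(n)} = o(1)$. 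Since there are at most $n$ relevant values of $s$, the total remains $o(1)$, and the complementary event holds w.h.p.

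I would expect the main obstacle to be controlling the union bound uniformly across the whole range of sizes: the number of candidate sets is largest relative to the concentration precisely at the lower end $s \approx \ln(n)$, and it is there that the spanning-tree count, the Chernoff estimate, and the specific constant $24$ in $\eps(n)$ must all conspire to produce decay. Once the estimate at $s = \ln(n)$ is secured and geometric decay in $s$ is verified, the remainder is routine.
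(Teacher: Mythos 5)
Your proposal is correct and follows essentially the same route as the paper's proof: a union bound over sets of each fixed size $s$, Cayley's formula to bound the probability of connectivity by $s^{s-2}p^{s-1}$, independence of the internal and crossing edges, and a Chernoff bound on the binomial boundary count, with the constant $24$ chosen so that the concentration term beats $\binom{n}{s}s^{s-2}p^{s-1}$ at $s=\ln(n)$. The only differences are cosmetic bookkeeping (you retain the factor $c$ in the exponent where the paper discards it via $c>1$ and bounds $ec\le\ln(n)$), so there is nothing to add.
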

\begin{proof}
Let $|S| = s$, where $s \in \left[\ln(n),{n}/2 \right]$. We write $\eps = \eps(n)$ for short. In the sequel we show that
\begin{equation}
\label{eq:tmpfixeds}
	\Pr\big[\exists \text{ connected } S: ~ |S| = s \text{ and } e(S,[n] \setminus S) \not\in (1 \pm \eps) \, s(n-s)p\big] \le 2ne^{-2s\ln\ln(n)}.
\end{equation}
By applying a union bound for all $s$ in the considered range the statement follows.

In order to estimate the probability in~\eqref{eq:tmpfixeds} note first that there exist $\binom{n}{s}$ sets of size $s$. If the random graph restricted to $S$ is connected then we can find a spanning tree within this subgraph. By Cayley's formula there exist $s^{s-2}$ distinct trees with $s$ nodes (proofs can be found e.g. in \cite{AZ10}, Chapter 30), and so $s^{s-2}p^{s-1}$ is a upper bound for the probability that $G_{n,p}$ contains any one of them. Moreover, in $G_{n,p}$ the edges with both endpoints in $S$ are independent from the edges with at most one endpoint in $S$. Therefore, the probability in~\eqref{eq:tmpfixeds} can be bounded by
\[
\binom{n}{s} s^{s-2}p^{s-1} \cdot \Pr\big[\vert e(S,[n] \setminus S) - s(n-s)p \vert > \eps s(n-s)p\big],
\]
where $S$ denotes any set of size $s$.
The quantity $e(S,[n] \setminus S)$ in $G_{n,p}$ is binomially distributed with parameters $s(n-s)$ and $p$. We will apply the following version of the Chernoff bound (see e.g.~\cite{MR1864966}). For a binomially distributed random variable $X$ and any $t>0$, 
\begin{equation}
\label{Chernoff}
\Pr[\vert X-\mathbb{E}[X] \vert > t] \le 2\exp\left\{\frac{-t^2}{2\mathbb{E}[X]+2t/{3}} \right\}.
\end{equation}
In addition $\binom{n}{s} \leq \left( \frac{en}{s} \right)^s$. By putting everything together the expression in~\eqref{eq:tmpfixeds} can be bounded for sufficiently large $n$ by
\[
2 \left( \frac{en}{s} \right)^s s^{s-2}p^{s-1}\, \exp\left\{-\frac{\eps^2}{3} s(n-s)p \right\}
\]
that with our assumptions equals
\begin{equation}
\label{proof:S connected}
\frac{2}{cs^2} \frac{n}{\ln(n)}
\exp \left\{ s \left( \ln(ec \ln(n))- \frac{c\eps^2}{3} \frac{n-s}{n} \ln(n) \right) \right\}.
\end{equation}
For sufficiently large $n$ we may bound $ec \le \ln(n)$ and $n-s \ge n/2$. Thus, by using the definition $\eps = (\frac{24\ln\ln(n)}{\ln(n)})^{1/2}$ the previous expression is for large $n$ at most
\[
2n \, \exp \left\{ s \left( 2\ln\ln(n)- \frac{\eps^2}{6} \ln(n) \right) \right\}
\le 2n e^{-2s\ln\ln(n)},
\]
as claimed. \qed
\end{proof}

Finally we consider the neighborhood of nodes. The following lemma states that w.h.p.\ most neighbors of some $v \in [n]$ have a degree very close to the expected degree\footnote{This observation was also made in~\cite{PPSS}.}.
\begin{lemma}
\label{lemma:N(v)}
Let $c > 1$ and $p = {c\ln(n)}/{n}$. For $v \in [n]$ set
\[
N'(v) = \left\{ w \in N(v): d(w) = pn \pm \ln^{3/4}(n) \right\}.
\]
Then w.h.p.\ for every $v \in [n]$ we have $\vert N(v) \setminus N'(v) \vert \leq \ln^{3/4}(n)$.
\end{lemma}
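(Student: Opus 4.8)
The plan is to bound the probability of the complementary event and show it is $o(1)$. Write $t = \ln^{3/4}(n)$ and call a vertex $w$ \emph{bad} if $|d(w) - pn| \ge t$; then $N(v) \setminus N'(v)$ is exactly the set of bad neighbours of $v$, and with $X_v = |N(v)\setminus N'(v)|$ the goal is to show $\Pr[\exists v \in [n]: X_v > t] = o(1)$. Since $pn = c\ln(n)$ is the expected degree and $d(w) \sim \mathrm{Bin}(n-1,p)$, the Chernoff bound~\eqref{Chernoff} gives that the probability $q_n$ that $w$ is bad (and, up to the order, that $|d(w)-pn|\ge t-2$) is $q_n \le \exp(-\Theta(t^2/(pn))) = \exp(-\Theta(\sqrt{\ln n}))$. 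The main obstacle is already visible: $q_n$ is only \emph{sub-polynomially} small, while a union bound over the $n$ choices of $v$ would demand a tail estimate of order $o(1/n)$. In particular $\mathbb{E}[X_v] \approx pn\cdot q_n = o(1)$, but this first-moment bound cannot survive multiplication by $n$, so a more delicate argument is needed.

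To circumvent this I would use a two-round edge exposure together with a global count of bad vertices. Fix $v$ and reveal all edges \emph{not} incident to $v$; this determines, for every $w \ne v$, the degree $d^-(w)$ of $w$ in $G - v$, which differs from $d(w)$ by at most the single (still hidden) edge $vw$. Hence a bad $w$ must satisfy $|d^-(w)-pn| \ge t-1$, and the number $M_v$ of such vertices is measurable with respect to the revealed edges. Conditionally on this information the indicators $\{vw \in E\}$ are independent $\mathrm{Bernoulli}(p)$ variables, and every bad neighbour of $v$ is one of the (at most $M_v$) vertices with $|d^-(w)-pn|\ge t-1$ that is in addition joined to $v$. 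Thus $X_v$ is, conditionally, stochastically dominated by $\mathrm{Bin}(M_v,p)$; crucially, the edges from $v$ are exposed \emph{after} it is decided which vertices are (nearly) bad, so they cannot inflate a degree across the threshold.

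The second ingredient controls $M_v$ globally and uniformly in $v$. Because $|d^-(w)-pn|\ge t-1$ forces $|d(w)-pn|\ge t-2$, every $M_v$ is at most the global count $M^{**} = |\{w : |d(w)-pn| \ge t-2\}|$, which does not depend on $v$. Here $\mathbb{E}[M^{**}] = n\,q_n \le n\exp(-\Theta(\sqrt{\ln n}))$, so Markov's inequality yields $M^{**} \le M_0$ with probability $1-o(1)$ for any threshold $M_0$ exceeding this mean by a growing factor; I would take $M_0 = n\exp(-\ln^{1/3}(n))$. On the event $\{M^{**}\le M_0\}$ we have $M_v \le M_0$ for all $v$ simultaneously, reducing the problem to estimating $\Pr[\mathrm{Bin}(M_0,p) > t]$ and multiplying by $n$.

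Finally I would bound the binomial tail crudely by $\Pr[\mathrm{Bin}(M_0,p)>t] \le \binom{M_0}{t}p^t \le (eM_0 p/t)^t$. With $M_0 p = c\ln(n)\exp(-\ln^{1/3}(n))$ and $t = \ln^{3/4}(n)$, the base is $e M_0 p/t = ec\,\ln^{1/4}(n)\exp(-\ln^{1/3}(n))$, so $\ln(eM_0p/t) = -\ln^{1/3}(n)(1-o(1))$ and the whole tail is $\exp(-\ln^{13/12}(n)(1-o(1)))$, which beats $1/n = \exp(-\ln n)$ since $13/12 > 1$. Summing over the $n$ choices of $v$ and adding the $o(1)$ probability that $M^{**}>M_0$ gives $\Pr[\exists v: X_v > t] = o(1)$, as required. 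The genuinely delicate point is the calibration of the intermediate threshold $M_0$: it must be large enough for the global Markov step yet small enough that the union-bounded binomial tail is $o(1/n)$, and the separation $\exp(-\sqrt{\ln n}) \ll \exp(-\ln^{1/3}(n)) \ll \exp(-\ln^{1/4}(n))$ between the mean, the chosen threshold, and the critical size is exactly what makes such a choice feasible.
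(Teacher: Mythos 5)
Your proof is correct, but it routes the two-round exposure the opposite way from the paper, so the comparison is worth recording. The paper fixes $u$, exposes the star at $u$ \emph{first}, and then works inside the (w.h.p.\ only $O(\ln n)$-element) set $N(u)$: it shows $e(N(u))\le 1$ up to probability $n^{-2+o(1)}$, observes that the external degrees $X_w=|N(w)\cap V_0|$ of the $w\in N(u)$ are independent binomials, and amplifies the single-vertex deviation probability $e^{-\Omega(\ln^{1/2}n)}$ by raising it to the power $\ln^{3/4}(n)$ via $\binom{d(u)}{\lceil\ln^{3/4}(n)\rceil}$, landing at $e^{-\Omega(\ln^{5/4}n)}=o(1/n)$. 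You expose the star at $v$ \emph{last}: the set $A_v$ of nearly-atypical vertices in $G-v$ is $(G-v)$-measurable, so $|N(v)\setminus N'(v)|$ is conditionally dominated by $\mathrm{Bin}(M_v,p)$, and the amplification comes from the edge probabilities $p^{\lceil\ln^{3/4}(n)\rceil}$ rather than from the deviation probabilities; the smallness of $M_v$ is supplied separately by a global first-moment (Markov) bound on $M^{**}$ with the intermediate threshold $M_0=ne^{-\ln^{1/3}(n)}$, whose calibration you verify correctly ($e^{-\Theta(\ln^{1/2}n)}\ll e^{-\ln^{1/3}(n)}$ for Markov, and $t\ln^{1/3}(n)=\ln^{13/12}(n)\gg\ln n$ for the union bound). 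Each approach buys something: yours dispenses entirely with the maximum-degree bound of Lemma~\ref{lemma:deg(v)} and with the control of $e(N(u))$ that the paper needs to relate $d(w)$ to $X_w$, at the price of the extra global Markov step and its $o(1)$ error term (so the failure probability is only $o(1)$ overall rather than $o(1/n)$ per vertex plus $n^{-2+o(1)}$ globally, which is immaterial here). All the small reductions you track --- $|d(w)-pn|\ge t$ forcing $|d^-(w)-pn|\ge t-1$ forcing $|d(w)-pn|\ge t-2$, and the fact that an unexposed edge to $v$ cannot push a vertex outside $A_v$ across the threshold --- check out.
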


\begin{proof}
Throughout the proof we assume that for any $u \in [n]$ there exists a constant $d$ such that $d(u) \leq d \ln(n)$, which is justified by Lemma \ref{lemma:deg(v)}. We fix a $u \in [n]$ and let $\tilde{N} := N(u) \backslash N'(u)$. To prove the statement we show that $\Pr[|\tilde{N}| > \ln^{3/4}(n)] = o(1/n)$ so that we can apply a union bound over all vertices in $u \in [n]$.

First, we consider the internal edges of the neighbourhood of $u$. Note that there are at most $(d \ln(n))^4$ possible pairs of edges inside $N(u)$, so that we get

\begin{equation}
\label{eq:Pr[>2]}
\Pr[e(N(u) \geq 2 ] \leq (d \ln(n))^4 p^2 = n^{-2+o(1)}.
\end{equation}
Next we consider the edges between $N(u)$ and $V_0 := [n] \setminus ( N(u) \cup \{u\} )$. For any $v \in [n]$ we consider the random number $X_v := | N(v) \cap V_0 |$ that describes the number of neighbours of $v$ lying in $V_0$. $X_v$ is binomially distributed with parameters $|V_0|$ and $p$. Thus we can use the version of the Chernoff bounds described in \eqref{Chernoff} (within the proof of Lemma \ref{lemma:S connected}) to obtain
\[
\Pr[|X_v-np|>\ln^{3/4}(n)-2] = e^{-\Omega(\ln^{1/2}(n))}.
\]
We define $B:=|\{v: |X_v-np|>\ln^{3/4}(n)-2\}|$. 
Since the random numbers $X_v$ are independent by construction, we get
\begin{align}
\label{eq:Pr[B]}
\Pr[B>\ln^{3/4}] 
&\leq d(u) \binom{d(u)}{\ulcorner \ln^{3/4}(n) \urcorner} \exp\{\Omega(\ln^{1/2}(n)) \ulcorner \ln^{3/4}(n) \urcorner \} \notag\\
&= e^{-\Omega(\ln^{5/4}(n))}.
\end{align}
Combinung \eqref{eq:Pr[>2]} and \eqref{eq:Pr[B]} yields
\[
\Pr[(e(N(u) \geq 2) \cup (B>\ln^{3/4})] \leq n^{-2+o(1)}=o\left(\frac{1}{n} \right).
\]
Finally note that $e(N(u)) < 2$ implies $X_v \leq d(v) \leq X_v+1$. This means that $\tilde{N}>\ln^{3/4}(n)$ implies $(e(N(u))\geq 2) \cup (B>\ln^{3/4}(n))$ which concludes the proof.\qed
\end{proof}

\section{Analysis of the Protocol}
\label{sec:aop}

In this section we describe a simple strategy for analyzing the spreading time on arbitrary (connected) graphs. It should be noted that this strategy only applies when contact times are governed by Poisson processes. We will apply this strategy to $G_{n,p}$ in the section hereafter. Let $G$ be any connected graph with $n$ nodes. Towards studying the distribution of $T(G)$ we divide the rumor spreading process into $n$ \emph{states}, where state $1 \leq j \leq n$ stands for the situation that $j$ nodes are informed. In what follows we denote the set of informed nodes in state $j$ by $I_j = I_j(G)$ and the set of uninformed nodes in state $j$ by $U_j = U_j(G) = [n]\setminus I_j(G)$. We denote by $t_j=t_j(G)$ the (random) time that the protocol needs to move from state $j$ to state $j+1$, where $1 \leq j < n$. Clearly $T(G)=\sum_{j=1}^{n-1} t_j$. 

Assume that $1 \leq j < n$ nodes are informed. We provide a general lemma that determines the distribution of $t_j$ for an arbitrary set of informed (and correspondingly uninformed) nodes.
\begin{lemma}
\label{Masterlemma}
Let $1 \le j < n$. Then $t_j$ is exponentially distributed with parameter 
\[
Q_j :=\sum_{v \in I_j} \frac{\vert N(v) \cap U_j \vert}{d(v)} + \sum_{w \in U_j} \frac{\vert N(w) \cap I_j \vert}{d(w)}.
\]
Moreover, conditional on $I_j$ the time $t_j$ is independent of $t_1, \ldots t_{j-1}$.
\end{lemma}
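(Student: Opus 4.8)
The plan is to exploit the memorylessness of the Poisson processes governing the contact times. The key observation is that the transition from state $j$ to state $j+1$ occurs precisely when some node on the "boundary" between $I_j$ and $U_j$ acts in a way that transmits the rumor across that boundary. I would enumerate exactly which atomic events cause a state change: an informed node $v \in I_j$ waking up (via its rate $1$ Poisson clock) and choosing to push to one of its uninformed neighbours, or an uninformed node $w \in U_j$ waking up and choosing to pull from one of its informed neighbours. Every other event — an informed node pushing to an already-informed neighbour, or an uninformed node pulling from an uninformed neighbour — leaves the set of informed nodes unchanged and can be ignored.

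First I would set up the competing-exponentials framework. Each node $u$ has an independent rate $1$ Poisson clock, and when it fires it selects a uniformly random neighbour, i.e.\ each specific neighbour with probability $1/d(u)$. I would model each ordered pair "node $u$ fires and selects neighbour $x$" as an independent Poisson process of rate $1/d(u)$; this is the standard thinning/splitting of a Poisson process into independent streams, one per neighbour. The effective transitions are then the subset of these streams that cross the cut: for $v \in I_j$ the streams toward each $w \in N(v) \cap U_j$ (each of rate $1/d(v)$), and for $w \in U_j$ the streams toward each $v \in N(w) \cap I_j$ (each of rate $1/d(w)$). The waiting time $t_j$ is the minimum of all these independent exponential clocks.

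The computation of $Q_j$ is then routine: the minimum of independent exponentials is exponential with parameter equal to the sum of the individual rates. Counting, the informed side contributes $\sum_{v \in I_j} |N(v) \cap U_j| / d(v)$ and the uninformed side contributes $\sum_{w \in U_j} |N(w) \cap I_j| / d(w)$, which is exactly $Q_j$. For the conditional independence claim, I would invoke the strong Markov property of the pooled Poisson process: conditional on the identity of $I_j$ (equivalently, on which boundary-crossing event just occurred and hence which node became informed), the memorylessness of the exponential clocks means the future evolution starts afresh with all clocks "reset," so $t_j$ depends only on the current configuration $I_j$ and not on the past holding times $t_1, \ldots, t_{j-1}$.

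The main obstacle is handling the conditioning carefully. The rates defining $Q_j$ depend on $I_j$, which is itself random and determined by the history of the process, so one must be precise that the assertion is about the \emph{conditional} law of $t_j$ given $I_j$. The cleanest route is to argue that the whole process is a continuous-time Markov chain on the state space of informed sets: the memoryless property guarantees that once we condition on the current informed set, the holding time before the next transition is exponential with the stated parameter and is independent of how we arrived at that set. I would make sure the thinning argument correctly accounts for the fact that a single node's clock feeds several boundary streams without double-counting, and that events not crossing the cut genuinely leave $I_j$ fixed so they do not contribute to $t_j$.
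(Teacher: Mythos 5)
Your proposal is correct and follows essentially the same route as the paper: the paper pools all $n$ rate-$1$ clocks into a single rate-$n$ process of attempts and thins by the success probability $q_j$ to get parameter $nq_j = Q_j$, while you split each clock into per-neighbour streams of rate $1/d(u)$ and take the minimum of the cut-crossing exponentials --- two standard, equivalent Poisson manipulations yielding the same rate. Your handling of the conditional independence via memorylessness and the Markov property on the state space of informed sets is likewise what the paper implicitly relies on (and is, if anything, spelled out more carefully than in the paper's own proof).
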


\begin{proof}
We assume that the times~$t_1, \ldots, t_{j-1}$ and~$I_j$ are known and determine the distribution of~$t_j$ by applying standard tools of probability theory~\cite{feller_book}. The probability that~$v \in I_j$ informs a uninformed node in a push attempt is~${\vert N(v) \cap U_j \vert}/{d(v)}$. Similarly, the probability that~$w \in U_j$ is informed in a pull attempt is~${\vert N(w) \cap I_j \vert}/{d(w)}$.  Therefore, the probability that a uninformed node is informed in a push or pull attempt is
\[
q_j = \frac{1}{n}~ \left(\sum_{v \in I_j} \frac{\vert N(v) \cap U_j \vert}{d(v)} + \sum_{w \in U_j} \frac{\vert N(w) \cap I_j \vert}{d(w)}\right).
\]
Since all nodes are equipped with rate~$1$ Poisson processes, the time between two consecutive push or pull attempts is exponentially distributed with parameter $n$. It follows that $t_j$ is exponentially distributed with parameter $Q_j = nq_j$, as claimed.\qed
\end{proof}


\section{The Expected Spreading Time on Random Graphs}
\label{sec:pthm1thm2}

Here we apply the results of the previous section to $G_{n,p}$. In particular, we determine the expected value of $T(G_{n,p})$ thus proving the first statements of Theorem \ref{MainThm1} and \ref{MainThm2} respectively.

\subsection{The case $p=\omega({\ln(n) }/{n})$}

Our goal is to compute the expectation of $T(G_{n,p})$ for $p=\alpha(n)\ln(n)/n$, $\alpha(n) = \omega(1)$. We will actually show a stronger result, namely that the conclusion of Theorem \ref{MainThm1} remains valid even when we replace $G_{n,p}$ with \textit{any} graph $G$ that satisfies the conclusion of Lemma \ref{lemma:dense}, with $p$ replaced by $\alpha(n)\ln(n)/n$. To this end, we first specify the distribution of $t_j(G)$ by applying Lemma \ref{Masterlemma}.

\begin{corollary}
\label{dense:t_j}
Let $G$ be any graph satisfying the conclusion of Lemma \ref{lemma:dense}. If $n$ is sufficiently large, then for any $I_j$, the distribution of $t_j(G)$ conditional on $I_j$ is an exponential distribution with parameter 
\begin{equation}
\label{eq:parboundsdense}
\left(1 \pm \sqrt{\frac{33}{\alpha(n)}} \right) \frac{2j(n-j)}{n-1}.
\end{equation}
\end{corollary}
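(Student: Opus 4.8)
The plan is to compute the exponential parameter $Q_j$ furnished by Lemma~\ref{Masterlemma} and show that the uniformity property of Lemma~\ref{lemma:dense} pins it down to the stated interval. Writing $\delta = \sqrt{8/\alpha(n)}$ for the error parameter of Lemma~\ref{lemma:dense}, the starting point is the identity
\[
\sum_{v \in I_j} |N(v) \cap U_j| \;=\; e(I_j, U_j) \;=\; \sum_{w \in U_j} |N(w) \cap I_j|,
\]
since both sums count the edges running between $I_j$ and $U_j$. Hence the two summands defining $Q_j$ are simply two differently-weighted counts of the same edge set, the weights being the reciprocal degrees $1/d(v)$, and the whole task reduces to controlling (i) all degrees appearing as weights and (ii) the single quantity $e(I_j, U_j)$.

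For (i), I would apply Lemma~\ref{lemma:dense} to the singletons $S = \{v\}$: since $e(\{v\}, [n]\setminus\{v\}) = d(v)$, the lemma gives $d(v) = (1 \pm \delta)(n-1)p$ \emph{simultaneously} for every $v \in [n]$. Consequently each $1/d(v)$ lies between $\frac{1}{(1+\delta)(n-1)p}$ and $\frac{1}{(1-\delta)(n-1)p}$ uniformly, and pulling these bounds out of both sums yields
\[
\frac{2\,e(I_j, U_j)}{(1+\delta)(n-1)p} \;\le\; Q_j \;\le\; \frac{2\,e(I_j, U_j)}{(1-\delta)(n-1)p}.
\]

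For (ii), Lemma~\ref{lemma:dense} applied to $S = I_j$ (of size $j$) gives $e(I_j, U_j) = (1 \pm \delta)\, j(n-j)p$. Substituting and cancelling the factor $p$ sandwiches $Q_j$ between $\frac{1-\delta}{1+\delta}\cdot\frac{2j(n-j)}{n-1}$ and $\frac{1+\delta}{1-\delta}\cdot\frac{2j(n-j)}{n-1}$. It then remains to absorb the compound factor into the claimed form, i.e.\ to verify $\frac{1+\delta}{1-\delta} \le 1 + \sqrt{33/\alpha(n)}$ together with the matching lower bound. Since $\frac{1+\delta}{1-\delta} - 1 = \frac{2\delta}{1-\delta}$ and $2\delta = \sqrt{32/\alpha(n)}$, this is where the slightly enlarged constant $33 > 32$ earns its keep: the extra room exactly compensates for the $1/(1-\delta)$ correction. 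Because $\alpha(n) = \omega(1)$ forces $\delta \to 0$, the ratio $\frac{2\delta}{1-\delta}$ stays below $\sqrt{33/\alpha(n)}$ once $n$ is large enough --- precisely the ``$n$ sufficiently large'' hypothesis of the corollary --- while the matching lower bound $1 - \frac{1-\delta}{1+\delta} = \frac{2\delta}{1+\delta} < 2\delta = \sqrt{32/\alpha(n)}$ holds outright. I expect the only real (if minor) care to be needed in this final bookkeeping of the constants, the two nested $(1\pm\delta)$ factors being the one place where a naive estimate would land on $32$ rather than the stated $33$.
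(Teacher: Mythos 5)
Your proposal is correct and follows essentially the same route as the paper: apply Lemma~\ref{lemma:dense} to singletons to control all degrees, apply it to $I_j$ to control $e(I_j,U_j)$, combine via Lemma~\ref{Masterlemma}, and absorb the resulting $\frac{1+\delta}{1-\delta}$ factor into the constant $33$. Your explicit bookkeeping of why $33$ (rather than $32$) suffices is exactly the ``simple algebraic transformations'' the paper leaves implicit.
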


\begin{proof}
Let~$I_j$ be any (connected) set with~$j$ nodes. For any node~$v$ in~$G$, by using~\eqref{property} with~$S=\{v\}$ we infer that~$d(v)= \big(1 \pm \sqrt{8/\alpha(n)} \big)(n-1)p$. Moreover, we have~$\sum_{v \in I_j} \vert N(v) \cap U_j \vert = e(I_j,U_j).
$
By applying \eqref{property} for a second time we obtain that
\[
\sum_{v \in I_j} \frac{\vert N(v) \cap U_j \vert}{d(v)}
= \frac{1 \pm \sqrt{8/\alpha(n)}}{1 \mp \sqrt{8/\alpha(n)}} ~ \frac{j(n-j)}{n-1}
\]
and simple algebraic transformations imply that we can bound this expression for sufficiently large~$n$ by~$\big( 1 \pm \sqrt{33/\alpha(n)} \big) j(n-j)/(n-1)$. We repeat the above calculation with $I_j$ and $U_j$ interchanged. Finally, we plug this into~$Q_j$ of Lemma \ref{Masterlemma} and the statement is shown. \qed
\end{proof}

We have now everything together to calculate the expectation of $T(G)$. Note that the first statement of Theorem \ref{MainThm1} follows immediately from Lemma \ref{dense: expectation}.

\begin{lemma}
\label{dense: expectation}
Let $G$ be any graph satisfying the conclusion of Lemma \ref{lemma:dense}. Then
\[
\mathbb{E}[T(G)]=\left( 1 \pm \sqrt{\frac{34}{\alpha(n)}} \right)H_{n-1} + O\left(\frac{\ln(n)}{n} \right).
\]
\end{lemma}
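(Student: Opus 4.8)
The plan is to combine the conditional exponential law from Corollary~\ref{dense:t_j} with linearity of expectation. Since $T(G)=\sum_{j=1}^{n-1}t_j$, it suffices to control each $\mathbb{E}[t_j]$ and sum. By Corollary~\ref{dense:t_j}, conditional on $I_j$ the time $t_j$ is exponential with a rate $Q_j$ lying in the interval $(1\pm\sqrt{33/\alpha(n)})\,2j(n-j)/(n-1)$, \emph{uniformly} over all admissible $I_j$. The mean of an exponential is the reciprocal of its rate, so $\mathbb{E}[t_j\mid I_j]=1/Q_j$ lies in the interval obtained by inverting the rate bounds. Because this holds for every realisation of $I_j$, the same two-sided bound survives taking expectation over $I_j$, giving the unconditional estimate $\mathbb{E}[t_j]=\mathbb{E}[\mathbb{E}[t_j\mid I_j]]$ without any need to understand the distribution of $I_j$ itself.

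First I would convert the rate interval into a mean interval. Writing $\delta=\sqrt{33/\alpha(n)}=o(1)$ since $\alpha(n)=\omega(1)$, I use $1/(1+\delta)\ge 1-\delta$ and $1/(1-\delta)\le 1+\delta+O(\delta^2)$; the second-order term is absorbed by enlarging the constant from $33$ to $34$, so that for large $n$
\[
\mathbb{E}[t_j]=\Bigl(1\pm\sqrt{\tfrac{34}{\alpha(n)}}\Bigr)\frac{n-1}{2j(n-j)}.
\]
Summing over $j$ and pulling the bounded prefactor out of the sum yields
\[
\mathbb{E}[T(G)]=\Bigl(1\pm\sqrt{\tfrac{34}{\alpha(n)}}\Bigr)\frac{n-1}{2}\sum_{j=1}^{n-1}\frac{1}{j(n-j)}.
\]

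The only genuine computation is the sum. Via the partial fraction $\frac{1}{j(n-j)}=\frac1n\bigl(\frac1j+\frac1{n-j}\bigr)$ and the symmetry $j\mapsto n-j$, one gets $\sum_{j=1}^{n-1}\frac{1}{j(n-j)}=\frac{2H_{n-1}}{n}$, whence the prefactor $\frac{n-1}{2}$ collapses the expression to $\frac{n-1}{n}H_{n-1}=H_{n-1}-\frac{H_{n-1}}{n}$. Since $H_{n-1}=\ln(n)+O(1)$, the correction is $\frac{H_{n-1}}{n}=O(\ln(n)/n)$; multiplying it by the bounded factor $(1\pm\sqrt{34/\alpha(n)})$ keeps it $O(\ln(n)/n)$, and this term is then folded into the additive error, giving exactly
\[
\mathbb{E}[T(G)]=\Bigl(1\pm\sqrt{\tfrac{34}{\alpha(n)}}\Bigr)H_{n-1}+O\!\left(\frac{\ln(n)}{n}\right).
\]

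I do not expect a serious obstacle here: the probabilistic input is fully supplied by Corollary~\ref{dense:t_j}, and the rest is exact arithmetic. The one point demanding care is the passage from the rate interval to the mean interval, where I must verify that the reciprocal's second-order term is dominated by the gap between $\sqrt{33/\alpha(n)}$ and $\sqrt{34/\alpha(n)}$ (which holds for large $n$ because $66/\alpha(n)\ll (\sqrt{34}-\sqrt{33})/\sqrt{\alpha(n)}$), together with the bookkeeping that the multiplicative error on the main term $H_{n-1}$ and the additive $O(\ln(n)/n)$ error do not interfere. Both are routine once the uniform-over-$I_j$ observation licenses the use of linearity of expectation.
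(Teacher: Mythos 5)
Your proposal is correct and follows essentially the same route as the paper: decompose $T(G)=\sum_j t_j$, apply the tower property with the uniform-in-$I_j$ rate bounds from Corollary~\ref{dense:t_j}, absorb the reciprocal's second-order error by relaxing $\sqrt{33/\alpha(n)}$ to $\sqrt{34/\alpha(n)}$, and evaluate the sum via the partial fraction $(j(n-j))^{-1}=(jn)^{-1}+(n(n-j))^{-1}$ to obtain $H_{n-1}-H_{n-1}/n$. The paper's own proof is just a terser version of the same argument, so no changes are needed.
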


\begin{proof}
Using Corollary \ref{dense:t_j} we obtain that $\mathbb{E}[T(G)]$ equals
\[
\sum_{j=1}^{n-1} \mathbb{E}[t_j] 
=\sum_{j=1}^{n-1} \mathbb{E}[ \mathbb{E}[ t_j ~\vert~ I_j ] ] 
= \left(1 \pm \sqrt{\frac{33}{\alpha(n)}}\right)^{-1} \sum_{j=1}^{n-1} \frac{n-1}{2j(n-j)}.
\]
Additionally, by making use of the bound $(1 \pm \sqrt{{33}/{\alpha(n)}})^{-1} = 1 \pm \sqrt{{34}/{\alpha(n)}}$ for sufficiently large $n$ and using the identity $(j(n-j))^{-1} = (jn)^{-1} + (n(n-j))^{-1}$ we obtain that
\[
\mathbb{E}[T(G)]= \left( 1 \pm \sqrt{\frac{34}{\alpha(n)}} \right) \left(H_{n-1} - \frac{1}{n} H_{n-1} \right).
\]
Together with $H_n=\ln(n)+O(1)$ we finally arrive at the claimed bound. \qed
\end{proof}

\subsection{The case $p={c\ln(n)}/{n}$, $c>1$}

We will again show a stronger result. This time we will prove that the conclusion of Theorem \ref{MainThm2} remains true even when we replace $G_{n,p}$ with any graph $G$ that satisfies the properties described in the Lemmas \ref{lemma:deg(v)} to \ref{lemma:N(v)}, with $p$ replaced by $c\ln(n)/n$. We begin with determining the distribution of $t_j(G)$; here the bounds are not as tight as in Corollary~\ref{dense:t_j} for all $j$, but they will suffice for our purposes.
\begin{corollary}
\label{sparse:tj}
Let $c>1$ and let $G$ be any graph satisfying the conclusions of Lemma \ref{lemma:deg(v)} -- \ref{lemma:N(v)}. Then there are constants $C'$ and $C$ depending on $c$ with $C'>C>0$, such that for any $I_j$ the distribution of $t_j(G)$ conditional on $I_j$ is an exponential distribution with parameter
\begin{enumerate}[i)]
	\item in $(C j,C'j)$ for $1\le j \le \ln(n)$,
	\item in $\big(1 \pm C\ln^{-1/4}(n)\big)2j$ for $\ln(n) \le j\le n/\ln^3(n)$,
	\item in $(C\min\{j, n-j\},C'\min\{j, n-j\})$ for $n/\ln^3(n) \le j \le n-n/\ln^3(n)$,
	\item exceeding $\big(1 - C \ln^{-1/4}(n) \big) (n-j)$ for $\ln(n) \le n-j \le n/\ln^3(n)$,
	\item in $(C(n-j),C'(n-j))$ for $1\le n-j \le \ln(n)$,
\end{enumerate}
\end{corollary}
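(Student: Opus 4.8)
The plan is to start from Lemma~\ref{Masterlemma} and write $Q_j = P_j + \Pi_j$, where $P_j = \sum_{v\in I_j}|N(v)\cap U_j|/d(v)$ is the ``push'' contribution and $\Pi_j = \sum_{w\in U_j}|N(w)\cap I_j|/d(w)$ is the ``pull'' contribution. Since both numerator sums equal $e(I_j,U_j)$ and Lemma~\ref{lemma:deg(v)} forces every degree into $[C\ln n, C'\ln n]$, one immediately gets the crude sandwich $2e(I_j,U_j)/(C'\ln n)\le Q_j\le 2e(I_j,U_j)/(C\ln n)$, i.e.\ $Q_j = \Theta(e(I_j,U_j)/\ln n)$. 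Applying the first statement of Lemma~\ref{lemma:e(S,[n]-S)} to whichever of $I_j,U_j$ has size at most $n/2$ yields $e(I_j,U_j) = \Theta(\min\{j,n-j\}\ln n)$. After relabelling the constants this already settles the three ``crude'' regimes (i), (iii) and (v); note that no use of connectivity is needed here.

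For the tight regime (ii) the $\Theta(\ln n)$ degree bounds are too weak, and I would treat $P_j$ and $\Pi_j$ separately. For the push term I would use the identity $P_j = j - \sum_{v\in I_j}|N(v)\cap I_j|/d(v)$, which follows from $|N(v)\cap U_j| = d(v)-|N(v)\cap I_j|$. Since $\sum_{v\in I_j}|N(v)\cap I_j| = 2e(I_j)$ and the second statement of Lemma~\ref{lemma:e(S,[n]-S)} gives $e(I_j)\le j\ln\ln n$ (here $j\le n/\ln^3 n\le n/\ln n$), the subtracted sum is $O(j\ln\ln n/\ln n) = O(j\ln^{-1/4}n)$, so $P_j = (1\pm O(\ln^{-1/4}n))j$; crucially this avoids pretending that the vertices of the worst-case set $I_j$ have near-average degree. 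For the pull term I would instead write $\Pi_j = \sum_{(v,w)}1/d(w)$ as a sum over the crossing edges $v\in I_j,\ w\in U_j$. Using that $I_j$ is connected with $\ln n\le j\le n/2$, Lemma~\ref{lemma:S connected} gives $e(I_j,U_j) = (1\pm\eps(n))j(n-j)p = (1\pm o(1))jpn$ (since $j\le n/\ln^3 n$ makes $(n-j)/n = 1-o(1)$ and $\eps(n) = o(\ln^{-1/4}n)$). Lemma~\ref{lemma:N(v)}, summed over $v\in I_j$, shows that all but at most $j\ln^{3/4}n$ of these crossing edges have their $U_j$-endpoint $w$ satisfy $d(w) = pn\pm\ln^{3/4}n$; each such edge contributes $(1\pm O(\ln^{-1/4}n))/pn$, while the remaining $O(j\ln^{-1/4}n)$ edges contribute at most $O(j\ln^{-1/4}n)$ in total. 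This gives $\Pi_j = (1\pm O(\ln^{-1/4}n))j$, hence $Q_j = (1\pm O(\ln^{-1/4}n))2j$, which is the claim after adjusting $C$.

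Regime (iv) only asks for a lower bound, which I would obtain cheaply from the pull term alone via the mirror identity $\Pi_j = (n-j) - \sum_{w\in U_j}|N(w)\cap U_j|/d(w)$. The subtracted sum is at most $2e(U_j)/(C\ln n)$, and since $n-j\le n/\ln^3 n\le n/\ln n$ the second statement of Lemma~\ref{lemma:e(S,[n]-S)} bounds $e(U_j)\le (n-j)\ln\ln n$, so this is $O((n-j)\ln^{-1/4}n)$. Thus $Q_j\ge\Pi_j\ge(1-C\ln^{-1/4}n)(n-j)$, as required. This regime needs neither Lemma~\ref{lemma:N(v)} nor connectivity, and in particular does not require the possibly disconnected set $U_j$ to be well behaved; the point is simply that a small set has few internal edges.

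The main obstacle is the pull term in regime (ii). The difficulty is genuine: unlike in the dense case of Corollary~\ref{dense:t_j} one cannot replace each denominator $d(w)$ by the average $pn$, because $I_j$ is an adversarial connected set and both its vertices and their neighbours may have atypical degrees; and unlike the push term, the pull term cannot be rerouted through the small quantity $e(I_j)$. What makes the argument go through is the combination of two facts special to $p = c\ln(n)/n$: the connectivity of $I_j$, which via Lemma~\ref{lemma:S connected} pins the number of crossing edges to $(1\pm o(1))jpn$ rather than merely $\Theta(j\ln n)$, and the local degree-regularity of Lemma~\ref{lemma:N(v)}, which guarantees that only a $O(\ln^{-1/4}n)$ fraction of those crossing edges can land on a vertex of atypical degree. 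I would expect the only remaining technical point to be the bookkeeping of the error terms, namely checking that $\eps(n)$, $\ln\ln(n)/\ln(n)$ and $\ln^{3/4}(n)/\ln(n)$ are all $o(\ln^{-1/4}n)$ and combine into a single $(1\pm O(\ln^{-1/4}n))$ factor.
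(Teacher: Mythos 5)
Your proposal is correct and follows essentially the same route as the paper: the crude regimes via $Q_j=\Theta(e(I_j,U_j)/\ln(n))$, the push term in regime (ii) via the identity $P_j=j-2e(I_j)/\Theta(\ln(n))$ together with $e(I_j)\le j\ln\ln(n)$, the pull term via Lemma~\ref{lemma:S connected} (connectivity of $I_j$) combined with Lemma~\ref{lemma:N(v)} to separate crossing edges landing on typical-degree versus atypical-degree neighbours, and regime (iv) from the pull term alone using $e(U_j)\le(n-j)\ln\ln(n)$. The only blemish is a harmless slip where you call the number of atypical crossing edges $O(j\ln^{-1/4}(n))$ rather than $O(j\ln^{3/4}(n))$; the stated total contribution $O(j\ln^{-1/4}(n))$ is nevertheless right.
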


\begin{proof}
Let $I_j$ be some connected subset of $[n]$ with $j$ elements. We apply Lemma \ref{Masterlemma} and determine $Q_j$.
Let us begin with the case $1 \le j \le \ln(n)$. By using Lemma~\ref{lemma:deg(v)} we may assume that $d(v) = \Theta(\ln(n))$. Additionally, by Lemma~\ref{lemma:e(S,[n]-S)} we may assume that $e(I_j, U_j) = \Theta(j \ln(n))$. These two properties together imply that $Q_j$ equals 
\[
\Theta\left(\sum_{v \in I_j} \frac{\vert N(v) \cap U_j \vert}{\ln(n)} + \sum_{w \in U_j} \frac{\vert N(w) \cap I_j \vert}{\ln(n)} \right) 
= \Theta\left( \frac{e(I_j,U_j)}{\ln(n)}\right)=\Theta(j).
\]
The claim for $1 \le n-j \le \ln(n)$ follows analogously by interchanging the roles of $I_j$ and $U_j$.

Next we consider the case $\ln(n) \le j \le n/\ln^3(n)$, where we bound $Q_j$ more accurately. We begin with the first sum in the expression for $Q_j$. Using that $e(I_j, U_j) = \Theta(j \ln(n))$ we obtain 
\[
\sum_{v \in I_j} \frac{\vert N(v) \cap U_j \vert}{d(v)} = j - \sum_{v \in I_j} \frac{\vert N(v) \cap I_j \vert}{d(v)}=j - \Theta\left(\frac{e(I_j)}{\ln(n)}\right).
\]
To estimate $e(I_j)$ we make use of the second statement in Lemma \ref{lemma:e(S,[n]-S)}, namely that for any $S \subseteq [n] $ with $\vert S \vert \leq n / \ln(n)$ we have that $e(S)\le\vert S \vert \ln \ln(n)$. This implies that
\begin{equation}
\label{eq:tmpQ1}
j - \Theta\left(\frac{e(I_j)}{\ln(n)}\right)=\left(1-O\left(\frac{\ln\ln(n)}{\ln(n)}\right)\right)j.
\end{equation}
We move on to the second sum of $Q_j$. We split this sum into three parts, namely into $\{w \in N'(I_j)\}$, $\{w \in N(I_j) \setminus N'(I_j)\}$ and $\{w \in U_j \setminus N(I_j)\}$, where we define $N'(S)$ to be the set containing all nodes that are outside of $S$ but belong to the neighborhood of $S$ and have a degree in $c\ln(n) \pm \ln^{3/4}(n)$, where $c\ln(n) = pn$. The reason for doing so is that Lemma \ref{lemma:N(v)} assures that among the adjacent nodes of any node $v \in [n]$ only a sub-logarithmic number of nodes have a degree outside of $c\ln(n) \pm \ln^{3/4}(n)$.

Among the three sums, the latter sum equals zero so that
\[
\sum_{w \in U_j} \frac{\vert N(w) \cap I_j \vert}{d(w)} 
 = \sum_{w \in N'(I_j)} \frac{\vert N(w) \cap I_j \vert}{d(w)} + \sum_{w \in N(I_j) \setminus N'(I_j)} \frac{\vert N(w) \cap I_j \vert}{d(w)}.
\]
Using Lemma \ref{lemma:deg(v)} and the definition of $N'(I_j)$ we infer that this is
\begin{equation}
\label{proof:lem1.2:eq}
\frac{e(I_j,N'(I_j))}{pn\pm \ln^{3/4}(n)}
	+ \frac{e(I_j, N(I_j) \setminus N'(I_j))}{\Theta(\ln(n))}.
\end{equation}
Thus we need to consider $e(I_j,N'(I_j))$ and $e(I_j, N(I_j) \setminus N'(I_j))$. By applying Lemma \ref{lemma:N(v)} we have that 
\begin{equation}
\label{eq:e(Ij,N(Ij)-N'(Ij))}
e(I_j,N(I_j) \setminus N'(I_j) )
\leq \sum_{v \in I_j} \vert N(v) \setminus N'(v) \vert
\leq j \ln^{3/4}(n). 
\end{equation}
To estimate $e(I_j,N'(I_j))$ we make use of the property in the conclusion of Lemma \ref{lemma:S connected}. Note that $I_j$ is necessarily a connected set in $G$, as any node gets the rumor from one of its neighbors. Thus Lemma \ref{lemma:S connected} guarantees that 
\begin{equation}
e(I_j,[n] \setminus I_j) = (1 \pm \eps(n)) \, \vert I_j \vert (n-\vert I_j \vert)p,
\end{equation} 
Moreover, using \eqref{eq:e(Ij,N(Ij)-N'(Ij))} we obtain that
\[
e(I_j,N'(I_j))
= \big(1 - O(\ln^{-1/4}(n))\big)j(n-j)p.
\]
Together with \eqref{eq:tmpQ1}, \eqref{proof:lem1.2:eq} and \eqref{eq:e(Ij,N(Ij)-N'(Ij))} this implies
$
Q_j=\big(1-O(\ln^{-1/4}(n))\big)2j
$.

Next we consider the case $n/\ln^3(n) \le j \le n-n/\ln^3(n)$. Again using that $d(v) = \Theta(\ln(n))$, and that $e(I_j, U_j) = \Theta(\min\{j, n-j\} \ln(n))$ we obtain that 
\[
Q_j = \Theta\left(\frac{e(I_j,U_j)}{\ln(n)}\right)=\Theta(\min\{j, n-j\}).
\]
This shows the statement also for the case $n/\ln^3(n) \le j \le n-n/\ln^3(n)$. 

Finally, for $\ln(n) \le n-j \le n/\ln^3(n)$, we use that $e(I_j,U_j) = \Theta((n-j)\ln(n))$ and obtain
\begin{align*}
\sum_{v \in I_j} \frac{\vert N(v) \cap U_j \vert}{d(v)} + \sum_{w \in U_j} \frac{\vert N(w) \cap I_j \vert}{d(w)} 
& \ge j - \sum_{w \in U_j} \frac{\vert N(w) \cap U_j \vert}{d(w)} \\
& =(n - j) - \Theta\left(\frac{e(U_j)}{\ln(n)}\right).
\end{align*}
To estimate $e(U_j)$ we make use of the second statement in Lemma \ref{lemma:e(S,[n]-S)}, namely that for any $S \subseteq [n] $ with $\vert S \vert \leq n / \ln(n)$ we have that $e(S)\le \vert S \vert \ln \ln(n) \le  \vert S \vert \ln^{3/4}(n)$. This implies that
\begin{equation}
(n-j) - \Theta\left(\frac{e(U_j)}{\ln(n)}\right)=\left(1-O\left(\ln^{-1/4}(n)\right)\right)(n-j).
\end{equation}
\qed
\end{proof}

With the above lemma at hand we are able to compute the expectation of $T(G)$. Note that the first statement of Theorem \ref{MainThm2} follows immediately from Lemma~\ref{sparse: expectation}.
\begin{lemma}
\label{sparse: expectation}
Let $c>1$ and let $G$ be any graph satisfying the conclusions of Lemmas \ref{lemma:deg(v)} -- \ref{lemma:N(v)}. Then
\[
\mathbb{E}[T(G)] \le 1.5 H_{n-1} + O(\ln^{3/4}(n)).
\]
\end{lemma}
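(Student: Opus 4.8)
The plan is to combine the tower property with the elementary fact that an exponential random variable with parameter $Q_j$ has mean $1/Q_j$. Since $T(G) = \sum_{j=1}^{n-1} t_j$, linearity of expectation gives $\mathbb{E}[T(G)] = \sum_{j=1}^{n-1} \mathbb{E}[\mathbb{E}[t_j \mid I_j]] = \sum_{j=1}^{n-1} \mathbb{E}[1/Q_j]$. Corollary~\ref{sparse:tj} provides, for every realization of $I_j$, a lower bound on $Q_j$ (hence a deterministic upper bound on $1/Q_j$) in each of its five ranges of $j$, so $\mathbb{E}[1/Q_j]$ inherits the same bound. I would therefore split the sum $\sum_{j=1}^{n-1}$ according to those five ranges and estimate each piece separately.

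For the three ``short'' ranges---$1 \le j \le \ln(n)$, $n/\ln^3(n) \le j \le n - n/\ln^3(n)$, and $1 \le n-j \le \ln(n)$---the parameter is of order $\min\{j, n-j\}$, so $1/Q_j = O(1/\min\{j, n-j\})$ and each contributes a harmonic-type sum of size $O(\ln\ln(n))$; for the middle range the two symmetric tails each reduce to $\sum_{n/\ln^3(n) \le j \le n/2} 1/j = O(\ln\ln(n))$. All of these are absorbed into the $O(\ln^{3/4}(n))$ error term. The two decisive contributions come from the remaining ranges. For $\ln(n) \le j \le n/\ln^3(n)$ we have $Q_j \ge (1 - C\ln^{-1/4}(n))2j$, so $1/Q_j \le (1 + O(\ln^{-1/4}(n)))/(2j)$, and summing yields $\tfrac12 \sum_{j=\ln(n)}^{n/\ln^3(n)} 1/j$ up to the multiplicative error. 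For $\ln(n) \le n-j \le n/\ln^3(n)$ the parameter only exceeds $(1 - C\ln^{-1/4}(n))(n-j)$---note the missing factor of two compared with the early stage---so after substituting $k = n-j$ this piece sums to $(1 + O(\ln^{-1/4}(n)))\sum_{k=\ln(n)}^{n/\ln^3(n)} 1/k$.

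Evaluating the harmonic sums via $\sum_{m=\ln(n)}^{n/\ln^3(n)} 1/m = \ln(n) + O(\ln\ln(n))$, the early range contributes $\tfrac12 \ln(n) + O(\ln^{3/4}(n))$ and the late range contributes $\ln(n) + O(\ln^{3/4}(n))$; adding the negligible short ranges gives $\tfrac32 \ln(n) + O(\ln^{3/4}(n))$, and since $H_{n-1} = \ln(n) + O(1)$ this is exactly $1.5\,H_{n-1} + O(\ln^{3/4}(n))$.

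The main obstacle is the careful bookkeeping of error terms in the two decisive ranges: the factor $1 + O(\ln^{-1/4}(n))$ is harmless on any single summand, but it multiplies a harmonic sum of size $\Theta(\ln(n))$, and one must verify that the product produces precisely an additive error of order $\ln^{-1/4}(n) \cdot \ln(n) = \ln^{3/4}(n)$ rather than something larger. One must also check that the contributions at the window boundaries (near $j = \ln(n)$ and $j = n/\ln^3(n)$, together with their reflections) are genuinely of lower order, and that the asymmetry between the factor $2j$ in the early stage and the factor $(n-j)$ in the late stage---which is the source of the constant $1.5$ rather than $1$---is respected throughout.
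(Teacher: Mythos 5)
Your proposal is correct and follows essentially the same route as the paper: split $\mathbb{E}[T(G)] = \sum_j \mathbb{E}[\mathbb{E}[t_j \mid I_j]]$ according to the five ranges of Corollary~\ref{sparse:tj}, absorb the three ranges with parameter $\Theta(\min\{j,n-j\})$ into $O(\ln\ln(n))$, and extract $\tfrac12\ln(n)$ and $\ln(n)$ from the ranges with parameters $(1-O(\ln^{-1/4}(n)))2j$ and $(1-O(\ln^{-1/4}(n)))(n-j)$ respectively, with the multiplicative $1+O(\ln^{-1/4}(n))$ error over a $\Theta(\ln n)$ sum giving the $O(\ln^{3/4}(n))$ term. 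The only cosmetic difference is that the paper merges the two decisive ranges into a single sum with summand $\tfrac{3}{2j}$ rather than adding $\tfrac12\ln(n)$ and $\ln(n)$ separately.
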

\begin{proof}
Recall that $T(G) = \sum_{j = 1}^n t_j$. We will apply Lemma \ref{sparse:tj} several times. First of all, if $1\le j \le \ln(n)$ or $1\le n-j \le \ln(n)$ then Lemma \ref{sparse:tj} guarantees the existence of a $C = C(c) > 0$ such that
\[
	\mathbb{E}[t_j] = \mathbb{E}[\mathbb{E}[ t_j ~\vert~ I_j]] \le (C\min\{j, n-j\})^{-1}.
\]
Using the bound $H_n = \ln(n) + O(1)$ we obtain
\begin{equation}
\label{eq:phase1+5}
0
\le
\mathbb{E} \left[\sum_{j=1}^{\ln(n)} t_j + \sum_{j=n - \ln(n)}^{n} t_j\right] 
\le \frac{2}{C} \sum_{j=1}^{\ln(n)} \frac1j = \frac2CH_{\ln(n)} = O(\ln\ln(n)).
\end{equation}
Similarly, by applying Lemma \ref{sparse:tj} for $\ln(n) \le j \le n/\ln^3(n)$ and $\ln(n)\le n-j \le n/\ln^3(n)$ we get with the abbreviation $\eps(n) = C\ln^{-1/4}(n)$ for sufficiently large $n$
\begin{equation}
\label{eq:phase2+4}
\mathbb{E} \left[\sum_{j=\ln(n)}^{n/\ln^3(n)} t_j + \sum_{j=n-n/\ln^3(n)}^{n-\ln(n)} t_j\right] 
\le (1 + \eps(n)) \cdot \sum_{j=\ln(n)}^{n/\ln^3(n)} \frac{3}{2j} 
= (1 \pm \eps(n)) \frac{3}{2} H_{n-1}.
\end{equation}
Finally, by applying Lemma \ref{sparse:tj} to all remaining $j$ we get that 
\[
0
\le
\mathbb{E} \left[\sum_{j=n/\ln^3(n)}^{n - n/\ln^3(n)} t_j \right]
\le \frac2C \sum_{j=n/\ln^3(n)}^{n/2} \frac{1}{j} 
= \frac{2}{C}(H_{n/2} - H_{n/\ln^3(n)-1}).
\]
Using again that $H_n=\ln(n)+O(1)$ the last expression simplifies to $O(\ln\ln(n))$. By summing this up together with~\eqref{eq:phase1+5} and~\eqref{eq:phase2+4} and using the fact $T(G) = \sum_{j=1}^n t_j$ we arrive at the claimed bound.  \qed
\end{proof}


\section{The Actual Spreading Time on Random Graphs}
\label{sec:pt3t4}

In this section we will complete the proofs of Theorems \ref{MainThm1} and \ref{MainThm2} by showing that the time for a rumor to spread to all nodes is concentrated around the expected value. As in the previous section we prove a stronger statement in replacing $G_{n,p}$ by any graph $G$ that satisfies the assumptions made in Corollary \ref{dense:t_j} or Corollary~\ref{sparse:tj}.

To begin with, we prove two lemmas motivated by the following fact derived in Corollaries \ref{dense:t_j} and \ref{sparse:tj}. There we studied the distribution of $t_j$ conditional on $I_j$ and found that it is an exponential distribution with a parameter that can be bounded uniformly for $I_j$. This means in particular that it is possible to find deterministic sequences~$f(n,j),g(n,j)$ such that $f(n,j) \leq \mathbb{E}[t_j ~\vert~ I_j]^{-1} \leq g(n,j)$. We show that these facts are sufficient to prove that the sequence of times $\{t_j\}_{j=1}^{n-1}$ can be stochastically bounded by sequences of independent random variables, for which we will later derive large deviation estimates.
\begin{lemma}
\label{stochbound}
Let $f(n,j)$ and $g(n,j)$ be deterministic sequences such that for $1 \le j < n$
$$f(n,j) \leq \mathbb{E}[t_j ~\vert~ I_j]^{-1} \leq g(n,j).$$
Moreover, let $\{t_j^+\}_{j=1}^{n-1}$, $\{t_j^-\}_{j=1}^{n-1}$ be sequences of independent random variables, where $t_j^+$ is exponentially distributed with parameter $f(n,j)$ and $t_j^-$ is exponentially distributed with parameter $g(n,j)$. Then $\{t_j^+\}_{j=1}^{n-1}$ stochastically dominates $\{t_j\}_{j=1}^{n-1}$ and $\{t_j^-\}_{j=1}^{n-1}$ is stochastically dominated by $\{t_j\}_{j=1}^{n-1}$ in the sense that for any $x \in \mathbb{R}_+^{n-1}$, we have that
\begin{alignat*}{2}
\Pr[t_1^- >x_1  , \ldots , t_{n-1}^- > x_{n-1}]
&\leq \Pr[t_1>x_1  , \ldots , t_{n-1} > x_{n-1}] \\
&\leq \Pr[t_1^+>x_1  , \ldots , t_{n-1}^+ > x_{n-1}].
\end{alignat*}
\end{lemma}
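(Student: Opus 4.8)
The plan is to reduce the joint (multivariate) statement to a collection of \emph{pointwise} one-dimensional exponential tail comparisons and then stitch them together by peeling off one coordinate at a time. The starting observation is that the survival function $x \mapsto e^{-\lambda x}$ of an exponential variable is decreasing in its parameter $\lambda$ on $x \ge 0$. By Lemma~\ref{Masterlemma}, conditional on $I_j$ the time $t_j$ is exponentially distributed with parameter $Q_j$, so $\mathbb{E}[t_j \mid I_j]^{-1} = Q_j$, and the hypothesis $f(n,j) \le \mathbb{E}[t_j\mid I_j]^{-1}\le g(n,j)$ says precisely that $f(n,j) \le Q_j \le g(n,j)$ holds \emph{for every} realization of $I_j$. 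Hence, for every $x_j \ge 0$,
\[
e^{-g(n,j)\,x_j} \;\le\; \Pr[t_j > x_j \mid I_j] \;=\; e^{-Q_j x_j} \;\le\; e^{-f(n,j)\,x_j},
\]
where, crucially, the two outer bounds are \emph{deterministic}, i.e.\ do not depend on $I_j$.

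Next I would combine these one-dimensional bounds by a downward induction on the coordinates. Fix $x = (x_1,\dots,x_{n-1}) \in \mathbb{R}_+^{n-1}$ and write $P_k = \Pr[t_1 > x_1, \dots, t_k > x_k]$. To pass from $P_k$ to $P_{k-1}$, I condition on the pair $(I_k, t_1, \dots, t_{k-1})$: the indicator $\mathbbm{1}[t_1 > x_1, \dots, t_{k-1} > x_{k-1}]$ is measurable with respect to this information, while Lemma~\ref{Masterlemma} guarantees that, conditional on $I_k$, the time $t_k$ is independent of $t_1, \dots, t_{k-1}$, so that
\[
\Pr[t_k > x_k \mid I_k, t_1, \dots, t_{k-1}] \;=\; \Pr[t_k > x_k \mid I_k] \;=\; e^{-Q_k x_k} \;\le\; e^{-f(n,k)\,x_k}.
\]
Taking expectations and using that the last bound is a constant yields $P_k \le e^{-f(n,k)\,x_k}\,P_{k-1}$; iterating from $k = n-1$ down to $k=1$ gives $P_{n-1} \le \prod_{j=1}^{n-1} e^{-f(n,j)\,x_j}$. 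Since the $t_j^+$ are independent and exponential with parameters $f(n,j)$, the product equals $\prod_{j=1}^{n-1}\Pr[t_j^+ > x_j] = \Pr[t_1^+ > x_1, \dots, t_{n-1}^+ > x_{n-1}]$, which is the claimed upper bound. The lower bound is entirely symmetric: since $Q_k \le g(n,k)$ I instead use $e^{-Q_k x_k} \ge e^{-g(n,k)\,x_k}$, and the same induction produces $P_{n-1} \ge \prod_{j=1}^{n-1} e^{-g(n,j)\,x_j} = \Pr[t_1^- > x_1, \dots, t_{n-1}^- > x_{n-1}]$.

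The only delicate point, and the step I expect to require the most care, is the conditioning manipulation in the inductive step, namely the replacement of $\Pr[t_k > x_k \mid I_k, t_1, \dots, t_{k-1}]$ by $\Pr[t_k > x_k \mid I_k]$. This is exactly where the independence clause of Lemma~\ref{Masterlemma} is indispensable: without it, the conditional tail of $t_k$ could still depend on the earlier times and the factorization would fail. Everything else is routine bookkeeping, as long as one keeps in mind that the bounds $f(n,j) \le Q_j \le g(n,j)$ hold \emph{uniformly} over all admissible $I_j$; it is this uniformity that converts the random conditional tails into the deterministic factors that may be pulled out of the expectation at each step.
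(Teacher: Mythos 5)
Your proposal is correct and follows essentially the same route as the paper: both arguments rest on the uniform deterministic bounds $f(n,j)\le Q_j\le g(n,j)$ over all realizations of $I_j$ together with the conditional-independence clause of Lemma~\ref{Masterlemma}, and both peel off one coordinate per step to obtain the product of exponential tails. The only cosmetic difference is that you peel the last coordinate via $P_k\le e^{-f(n,k)x_k}P_{k-1}$ while the paper conditions forward starting from $I_1$; the substance is identical.
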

\begin{proof}
We first prove the statement for $\{t_j^+\}_{j=1}^{n-1}$. The independence of $\{t_j^+\}_{j=1}^{n-1}$ implies that
\begin{equation}
\label{eq:probInd}
\Pr[t_1^+>x_1  , \ldots , t_{n-1}^+ > x_{n-1}]
= \prod_{j=1}^{n-1} \Pr[t_j^+ > x_j] = \prod_{j=1}^{n-1} e^{-f(n,j)x_j}.
\end{equation}
On the other hand we can estimate $\Pr[t_1>x_1  , \ldots , t_{n-1} > x_{n-1}]$ as follows. First note that by conditioning on $I_1$ we get for any $x_1 > 0$ that
\[
\begin{split}
&\Pr[t_1>x_1  , \ldots , t_{n-1} > x_{n-1}] \\
&\quad = \mathbb{E}\big[\Pr[t_2>x_2, \ldots , t_{n-1} > x_{n-1} ~\vert~ I_1,t_1>x_1]\Pr[t_1>x_1 ~\vert~ I_1]\big].
\end{split}
\]
Since $f(n,j) \leq \mathbb{E}[t_j ~ \vert~ I_j]^{-1}$ we obtain that this is at most
\[
e^{-f(n,1) x_1} P(t_2>x_2, \ldots , t_{n-1} > x_{n-1} ~\vert~ t_1>x_1).
\]
Moreover, for any $1 < k \le n-2$
we obtain 
\[
\begin{split}
& \Pr[t_k >x_k, \ldots , t_{n-1} > x_{n-1} ~\vert~ t_1>x_1 ,\ldots, t_{k-1} > x_{k-1}] \\
= &~\mathbb{E}\big[\Pr[t_{k+1} >x_{k+1}, \ldots , t_{n-1} > x_{n-1} ~\vert~ t_1>x_1 ,\ldots, t_k > x_k, I_k] \\
& \hspace{4cm} \cdot \Pr[t_k > x_k ~\vert~ t_1>x_1 ,\ldots, t_{k-1} > x_{k-1}, I_k]\big].
\end{split}
\]
By applying Lemma~\ref{Masterlemma} we infer that $t_k$ conditioned on any value of $I_k$ is independent of $t_1, \dots, t_{k-1}$. Therefore
\[
\Pr[t_k > x_k ~\vert~ t_1>x_1 ,\ldots, t_{k-1} > x_{k-1}, I_k] = \Pr[t_k > x_k ~\vert~ I_k],
\]
so that 
\[
\begin{split}
& \Pr[t_k >x_k, \ldots , t_{n-1} > x_{n-1} ~\vert~ t_1>x_1 ,\ldots, t_{k-1} > x_{k-1}] \\
\le & ~ e^{-f(n,k)x_k} \Pr[t_{k+1} >x_{k+1}, \ldots , t_{n-1} > x_{n-1} ~\vert~ t_1>x_1 ,\ldots, t_{k} > x_{k}].
\end{split}
\]
By induction we finally arrive at
\[
\Pr[t_1>x_1  , \ldots , t_{n-1} > x_{n-1}] \leq \prod_{j=1}^{n-1} e^{-f(n,j) x_j},
\]
which, together with~\eqref{eq:probInd}, proves the second inequality in the conclusion of the lemma. The first inequality is proven completely analogouesly by assuming $g(n,j) \geq \mathbb{E}[t_j ~ \vert~ I_j]^{-1}$ instead of $f(n,j) \leq \mathbb{E}[t_j ~ \vert~ I_j]^{-1}$ in the previous calculations. \qed
\end{proof}

Note that Corollary~\ref{dense:t_j} and Lemma~\ref{sparse:tj} guarantee that we can apply the previous lemma with 
$$f(n,j), g(n,j) = \Theta(\min\{j, n-j\}),$$
uniformly for $1 \le j < n$. The second lemma estimates the moment-generating function of $T(G)$ with this assumption; this will be our main tool in studying the distribution of $T(G)$.
\begin{lemma}
\label{momentgen}
Let $\{t_j^+\}_{j=1}^{n-1}$ and $\{t_j^-\}_{j=1}^{n-1}$ be as in Lemma \ref{stochbound} with $f(n,j), g(n,j) = \Theta(\min\{j, n-j\})$. Moreover, let $T^+(G):=\sum_{j=1}^{n-1} t_j^+$ and $T^-(G):=\sum_{j=1}^{n-1} t_j^-$. Then
\begin{alignat*}{2}
	\textrm{for } 0 < \lambda < \min_{j\in[n-1]} f(n,j), & \quad \mathbb{E}\left[e^{\lambda T(G)} \right] &\leq \exp\left\{\lambda \mathbb{E}[T^+(G)]+O(1)\right\}, \\
	\textrm{for } \lambda <0, & \quad \mathbb{E}\left[e^{\lambda T(G)} \right] &\leq \exp\left\{\lambda \mathbb{E}[T^-(G)]+O(1) \right\}.
\end{alignat*}
\end{lemma}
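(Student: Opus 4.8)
The plan is to reduce the moment-generating function of $T(G)$ to that of the independent comparison sequences and then analyse the resulting product. First I would establish the product bound $\mathbb{E}[e^{\lambda T(G)}] \le \mathbb{E}[e^{\lambda T^+(G)}]$ (for $0<\lambda<\min_j f(n,j)$) by an iterated conditioning argument in the spirit of the proof of Lemma~\ref{stochbound}. Recall from Lemma~\ref{Masterlemma} that, conditional on $I_j$, the time $t_j$ is exponentially distributed with some parameter $Q_j = \mathbb{E}[t_j \mid I_j]^{-1} \in [f(n,j),g(n,j)]$ and is independent of $t_1,\dots,t_{j-1}$. Hence $\mathbb{E}[e^{\lambda t_j}\mid I_j] = Q_j/(Q_j-\lambda)$ whenever $\lambda<Q_j$, which holds since $\lambda<\min_j f(n,j)\le f(n,j)\le Q_j$. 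Writing $e^{\lambda T(G)}=\prod_{j=1}^{n-1}e^{\lambda t_j}$ and peeling off the factors from $j=n-1$ downwards, each conditional expectation $\mathbb{E}[e^{\lambda t_j}\mid I_j]$ is, for $\lambda>0$, a decreasing function of $Q_j$ and therefore at most $f(n,j)/(f(n,j)-\lambda)=\mathbb{E}[e^{\lambda t_j^+}]$. The conditional independence lets me carry out the peeling exactly as in Lemma~\ref{stochbound}, yielding $\mathbb{E}[e^{\lambda T(G)}]\le\prod_{j=1}^{n-1} f(n,j)/(f(n,j)-\lambda)$, which by independence equals $\mathbb{E}[e^{\lambda T^+(G)}]$. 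For $\lambda<0$ the same argument applies with $e^{\lambda t_j}$ now increasing in $Q_j$, so that $\mathbb{E}[e^{\lambda t_j}\mid I_j]\le g(n,j)/(g(n,j)-\lambda)$ and one obtains $\mathbb{E}[e^{\lambda T(G)}]\le\mathbb{E}[e^{\lambda T^-(G)}]$.

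Second I would pass to logarithms. Since $\mathbb{E}[T^+(G)]=\sum_{j} f(n,j)^{-1}$, in the case $\lambda>0$ it suffices to show that
\[
\sum_{j=1}^{n-1}\Big(-\log\big(1-\tfrac{\lambda}{f(n,j)}\big) - \tfrac{\lambda}{f(n,j)}\Big) = O(1),
\]
as the left-hand side is exactly $\log\mathbb{E}[e^{\lambda T^+(G)}] - \lambda\,\mathbb{E}[T^+(G)]$. Each summand equals $\sum_{k\ge2}(\lambda/f(n,j))^k/k \ge 0$. The crucial input is that $f(n,j)=\Theta(\min\{j,n-j\})$, so that $\sum_{j=1}^{n-1} f(n,j)^{-2}=\Theta\big(\sum_j \min\{j,n-j\}^{-2}\big)=O(1)$, the last bound coming from $\sum_{k\ge1}k^{-2}<\infty$. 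For the indices $j$ with $f(n,j)\ge 2\lambda$ I would bound each summand by $(\lambda/f(n,j))^2$ (using $\tfrac{1}{2(1-\lambda/f(n,j))}\le 1$), so their total is $O\big(\lambda^2\sum_j f(n,j)^{-2}\big)=O(1)$ as $\lambda=O(1)$. The analogous computation for $\lambda<0$ uses $g(n,j)=\Theta(\min\{j,n-j\})$ together with the elementary inequality $y-\log(1+y)\le y^2/2$ for $y\ge0$, applied to $y=-\lambda/g(n,j)$.

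The main obstacle is the handful of boundary indices, namely those $j$ near $1$ and near $n-1$ for which $f(n,j)$ is of constant order and $\lambda/f(n,j)$ is not bounded away from $1$; for these the Taylor tail cannot be controlled by the quadratic estimate, and indeed the summand $-\log(1-\lambda/f(n,j))-\lambda/f(n,j)$ is genuinely large as $\lambda$ approaches $\min_j f(n,j)$. There are only $O(1)$ such indices, since $f(n,j)\ge 2\lambda$ fails only when $\min\{j,n-j\}=O(\lambda)=O(1)$, so each contributes a bounded amount \emph{provided $\lambda$ stays bounded away from the critical value $\min_j f(n,j)$} (equivalently, $\lambda/f(n,j)\le 1-\delta$ uniformly). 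This is exactly the regime in which the lemma is invoked, so I would isolate these finitely many endpoint terms and absorb their bounded total into the additive $O(1)$, thereby completing the estimate in both cases.
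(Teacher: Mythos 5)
Your proposal is correct and follows essentially the same route as the paper: reduce $\mathbb{E}[e^{\lambda T(G)}]$ to the product $\prod_j f(n,j)/(f(n,j)-\lambda)$ (resp.\ with $g$) via the conditional-independence/peeling argument underlying Lemma~\ref{stochbound}, then control $\log\mathbb{E}[e^{\lambda T^{\pm}(G)}]-\lambda\mathbb{E}[T^{\pm}(G)]$ by the quadratic tail and the convergent sum $\sum_j \min\{j,n-j\}^{-2}=O(1)$. Your explicit treatment of the $O(1)$ many boundary indices, and the observation that the additive $O(1)$ is uniform only when $\lambda$ stays bounded away from $\min_j f(n,j)$, is in fact slightly more careful than the paper's write-up, which absorbs those terms into the error without comment.
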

\begin{proof}
We first prove the statement for $\lambda >0$. The fact that $\{t_j^+\}_{j=1}^{n-1}$ stochastically dominates $\{t_j\}_{j=1}^{n-1}$ implies
\[
\mathbb{E}\left[e^{\lambda T(G)}\right] \leq \mathbb{E}\left[e^{\lambda T^+(G)}\right]
\]
Using that $\{t_j^+\}_{j=1}^{n-1}$ are independent we obtain
\[
\mathbb{E}\left[e^{\lambda T^+(G)}\right] = \prod_{j=1}^{n-1} \mathbb{E}\left[e^{\lambda t_j^+}\right] =\prod_{j=1}^{n-1}\frac{1}{1-\lambda f(n,j)^{-1}}.
\]
Here we have to restrict $\lambda$ to the interval $(0,\min_{j\in[n-1]} f(n,j))$, as otherwise some of the moment-generating functions might not exist. Since 
$$f(n,j)= \Theta(\min\{j, n-j\})$$ 
we infer that
\begin{align*}
\frac{1}{1-\lambda f(n,j)^{-1}} &= \sum_{k=0}^{\infty} (\lambda f(n,j)^{-1})^k \\ 
&= 1 + \lambda f(n,j)^{-1} + O \left(\min\{j^2, (n-j)^2\}^{-1} \right).
\end{align*}
Thus, so far we obtained that
\[
\mathbb{E}\left[e^{\lambda T(G)}\right]
\leq \prod_{j=1}^{n-1} \left( 1 + \lambda f(n,j)^{-1} + O \Big(\min\{j^2, (n-j)^2\}^{-1}  \Big)\right).
\]
Taking logarithms yields
\[
\ln \left(\mathbb{E}\left[e^{\lambda T(G)}\right] \right) \leq \sum_{j=1}^{n-1} \ln\left( 1 + \lambda f(n,j)^{-1} + O \Big(\min\{j^2, (n-j)^2\}^{-1} \Big)\right) .
\]
Using the estimate $\ln(1+x) \le x$ , $x\in(-1,1)$ we get that
\[
\begin{split}
\ln \left(\mathbb{E}\left[e^{\lambda T(G)}\right] \right)
&\leq \sum_{j=1}^{n-1} \lambda f(n,j)^{-1} + O \left(\min\{j^2, (n-j)^2\}^{-1}  \right)  \\
&= \lambda \mathbb{E}[T^+(G)] + O\left(1\right),
\end{split}
\]
which proves the first statement. The second statement follows completely analogously by using the fact $\mathbb{E}[e^{\lambda T(G)}] \leq \mathbb{E}[e^{\lambda T^-(G)}]$ for $\lambda < 0$ and that the moment-generating function $\mathbb{E}[e^{\lambda X}]$ of an exponentially distributed random variable $X$ exists for any $\lambda < 0$. \qed
\end{proof}

We move on to the two Corollaries \ref{dense: T(G)} and \ref{sparse: T(G)} that conclude the proofs of Theorem \ref{MainThm1} and \ref{MainThm2}. 

\begin{corollary}
\label{dense: T(G)}
Let $G$ be any graph with $n$ nodes that satisfies the conclusion of Lemma~\ref{lemma:dense}.  Then for any $\lambda \in (0,2)$  we have for $n$ large enough and any $t>0$ that
\[
\Pr\left[ \vert T(G) - \mathbb{E}[T(G)] \vert > t \right] \leq \exp\left\{ 3\sqrt{\frac{34}{\alpha(n)}}\ln(n) - \lambda t + O\left(1\right)\right\}.
\]
\end{corollary}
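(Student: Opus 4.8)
The plan is to obtain both tails by the exponential Markov (Chernoff) inequality, feeding the moment-generating function estimates of Lemma~\ref{momentgen} into the exponent and then controlling the gap between $\mathbb{E}[T^{\pm}(G)]$ and the true expectation $\mathbb{E}[T(G)]$ via Corollary~\ref{dense:t_j} and Lemma~\ref{dense: expectation}. For the upper tail I would fix $\lambda\in(0,2)$ and write, for any $t>0$,
\[
\Pr[T(G)-\mathbb{E}[T(G)]>t]=\Pr\big[e^{\lambda T(G)}>e^{\lambda(\mathbb{E}[T(G)]+t)}\big]\le e^{-\lambda\mathbb{E}[T(G)]-\lambda t}\,\mathbb{E}\big[e^{\lambda T(G)}\big].
\]
Before invoking Lemma~\ref{momentgen} I must check $\lambda<\min_j f(n,j)$, where by Corollary~\ref{dense:t_j} one may take $f(n,j)=\big(1-\sqrt{33/\alpha(n)}\big)\,2j(n-j)/(n-1)$. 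Since $j(n-j)$ is minimized at the endpoints $j=1,n-1$, this minimum equals $2\big(1-\sqrt{33/\alpha(n)}\big)$, which tends to $2$ as $n\to\infty$ because $\alpha(n)=\omega(1)$; hence for $n$ large enough every fixed $\lambda<2$ is admissible, which is exactly the range in the statement and pins it down. Lemma~\ref{momentgen} then gives $\mathbb{E}[e^{\lambda T(G)}]\le\exp\{\lambda\mathbb{E}[T^+(G)]+O(1)\}$, turning the right-hand side into $\exp\{\lambda(\mathbb{E}[T^+(G)]-\mathbb{E}[T(G)])-\lambda t+O(1)\}$.

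The heart of the argument is to show that $\mathbb{E}[T^+(G)]-\mathbb{E}[T(G)]$ is of order $\sqrt{34/\alpha(n)}\,\ln(n)$. Since $t_j^+$ is exponential with parameter $f(n,j)$, we have $\mathbb{E}[T^+(G)]=\big(1-\sqrt{33/\alpha(n)}\big)^{-1}\sum_{j}(n-1)/(2j(n-j))$, and the sum is precisely the one evaluated in the proof of Lemma~\ref{dense: expectation}, namely $(1-1/n)H_{n-1}$. Using $\big(1-\sqrt{33/\alpha(n)}\big)^{-1}\le 1+\sqrt{34/\alpha(n)}$ yields $\mathbb{E}[T^+(G)]\le\big(1+\sqrt{34/\alpha(n)}\big)H_{n-1}$, while Lemma~\ref{dense: expectation} gives $\mathbb{E}[T(G)]\ge\big(1-\sqrt{34/\alpha(n)}\big)H_{n-1}-O(\ln(n)/n)$. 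Subtracting, and using $H_{n-1}=\ln(n)+O(1)$ together with $\sqrt{34/\alpha(n)}=o(1)$, gives $\mathbb{E}[T^+(G)]-\mathbb{E}[T(G)]\le 2\sqrt{34/\alpha(n)}\,\ln(n)+O(1)$. Multiplying by $\lambda$ and absorbing the additive constant into $O(1)$ bounds $\lambda\big(\mathbb{E}[T^+(G)]-\mathbb{E}[T(G)]\big)$ by $3\sqrt{34/\alpha(n)}\,\ln(n)+O(1)$ for the admissible $\lambda$, which already matches the claimed exponent for the upper tail.

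For the lower tail I would repeat the computation with $\lambda<0$: here $\Pr[T(G)-\mathbb{E}[T(G)]<-t]\le e^{\lambda\mathbb{E}[T(G)]+\lambda t}\mathbb{E}[e^{\lambda T(G)}]$, and since the moment-generating function of an exponential exists for every negative argument, Lemma~\ref{momentgen} applies with no further restriction and yields $\mathbb{E}[e^{\lambda T(G)}]\le\exp\{\lambda\mathbb{E}[T^-(G)]+O(1)\}$. The analogous expectation estimate, now built from $\big(1+\sqrt{33/\alpha(n)}\big)^{-1}$ and the lower surrogate $T^-$, produces the same $3\sqrt{34/\alpha(n)}\,\ln(n)$ coefficient after substituting $\lambda=-\mu$. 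Combining the two one-sided estimates by a union bound, the resulting factor $2$ is swallowed by the $O(1)$ in the exponent, giving the stated two-sided bound. \textbf{The main obstacle} I anticipate is the constant bookkeeping of the second paragraph: one must compare $\mathbb{E}[T^+(G)]$ and the true $\mathbb{E}[T(G)]$ against the \emph{same} normalizing sum $(1-1/n)H_{n-1}$ and apply $\big(1\pm\sqrt{33/\alpha(n)}\big)^{-1}=1\pm\sqrt{34/\alpha(n)}$ consistently, so that the leading $H_{n-1}$ terms cancel and only the $\sqrt{34/\alpha(n)}\,\ln(n)$ discrepancy survives while all else collapses into $O(1)$; confirming that the admissibility threshold $\min_j f(n,j)$ is asymptotically $2$, which fixes the range $\lambda\in(0,2)$, is the remaining point requiring care.
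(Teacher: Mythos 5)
Your proposal follows essentially the same route as the paper's proof: exponential Markov, stochastic domination of $\{t_j\}$ by independent exponentials via Corollary~\ref{dense:t_j} and Lemma~\ref{stochbound}, the moment-generating-function bound of Lemma~\ref{momentgen}, and a comparison of $\mathbb{E}[T^{\pm}(G)]$ with $\mathbb{E}[T(G)]$; you also identify the admissibility threshold $\min_j f(n,j)=2\big(1-\sqrt{33/\alpha(n)}\big)\to 2$ exactly as the paper does. One step does not follow as written: from $\mathbb{E}[T^{+}(G)]-\mathbb{E}[T(G)]\le 2\sqrt{34/\alpha(n)}\,\ln(n)+O(1)$ and $\lambda<2$ you can only conclude $\lambda\big(\mathbb{E}[T^{+}(G)]-\mathbb{E}[T(G)]\big)\le 4\sqrt{34/\alpha(n)}\,\ln(n)+O(1)$, not the claimed $3\sqrt{34/\alpha(n)}\,\ln(n)+O(1)$; note that the paper's own intermediate bound \eqref{eq: upper bound} likewise carries the constant $4$, so the mismatch with the stated constant $3$ is an inconsistency already present in the paper, and it is harmless for the application in Theorem~\ref{MainThm1}, but your final sentence of the second paragraph is a non sequitur as stated. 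A small further slip: in the lower-tail Markov step with $\lambda<0$ the prefactor should be $e^{-\lambda\mathbb{E}[T(G)]+\lambda t}$, not $e^{\lambda\mathbb{E}[T(G)]+\lambda t}$.
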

\begin{proof}
We prove the statement separately for the upper tail and the lower tail. We begin with the upper tail. Here, Markov's inequality for the monotonically increasing function $e^{\lambda x}$, $\lambda >0$ implies
\begin{equation}
\label{eq:Tuppera}
\Pr\left[ T(G) > \mathbb{E}[T(G)] + t \right] \leq \mathbb{E}\left[e^{\lambda T(G)} \right] e^{-\lambda \mathbb{E}[T(G)]- \lambda t}.
\end{equation}
We use that a sequence $\{t_j^+\}_{j=1}^{n-1}$ of independent random variables, where~$t_j^+ \sim \mathsf{Exp}\left(\left(1-\sqrt{33/\alpha(n)} \right) 2j(n-j)/(n-1) \right)$, dominates stochastically $\{t_j\}_{j=1}^{n-1}$. This can be inferred in applying Corollary \ref{dense:t_j} to Lemma \ref{stochbound} above. Letting $T^+(G):=\sum_{j=1}^{n-1} t_j^+$ we can thus bound the above expression by
\begin{equation}
\label{eq:boundE[exp(lambdaT(G)]}
\mathbb{E}\left[e^{\lambda T(G)} \right] \le \mathbb{E}\big[e^{\lambda T^+(G)} \big]
\le \mathbb{E}\big[e^{\lambda T^+(G) + O(1)} \big]
= \exp\left\{\lambda \mathbb{E}[T^+(G)] + O(1)\right\}.
\end{equation}
For the second inequality in Eq.~\eqref{eq:boundE[exp(lambdaT(G)]} we use Lemma~\ref{momentgen}. Note that $\lambda$ has to be restricted to the interval $\big(0,2\big(1-\sqrt{33/\alpha(n)} \big)\big)$ here. Plugging this into~\eqref{eq:Tuppera} and using that $\lambda < 2$ and $H_n = \ln(n) + O(1)$ we arrive at the bound
\begin{equation}
\label{eq: upper bound}
\Pr\left[ T(G) > \mathbb{E}[T(G)] + t \right] \leq 
\exp\left\{ 4\sqrt{\frac{34}{\alpha(n)}}\ln(n) - \lambda t + O(1)\right\}.
\end{equation}

For the lower tail we also apply Markov's inequality for $e^{\lambda x}$, $\lambda >0$ and obtain
$\Pr \left( T(G) < \mathbb{E}[T(G)] - t \right) \leq \mathbb{E}\left[e^{-\lambda T(G)} \right] \exp\left\{ \lambda \mathbb{E}[T(G)] - \lambda t \right\}$.
To bound this expression we use that a sequence $\{t_j^-\}_{j=1}^{n-1}$ of independent random variables, where $t_j^- \sim \mathsf{Exp}\big(\big(1+\sqrt{33/\alpha(n)} \big)2j(n-j)/(n-1)\big)$, is stochastically dominated by $\{t_j\}_{j=1}^{n-1}$. Again, this can be inferred by applying Corollary \ref{dense:t_j} to Lemma \ref{stochbound}. After the same steps as for the upper tail we deduce that for $\lambda < 2$ we get the same bound as in \eqref{eq: upper bound}. \qed
\end{proof}

\begin{corollary}
\label{sparse: T(G)}
Let $G$ be any graph with $n$ nodes that satisfies the conclusions of Lemma \ref{lemma:deg(v)} -- \ref{lemma:N(v)} and take $C'>0$ from Corollary~\ref{sparse:tj}. Then for any $\lambda \in (0,C')$ we have for $n$ large enough and any $t>0$
\[
\Pr \left[ T(G) - \mathbb{E}[T(G)]  > t \right] \leq \exp\left\{- \lambda t + O\left(\ln^{3/4}(n)\right)\right\},
\]
\end{corollary}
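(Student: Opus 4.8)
The plan is to mirror the proof of Corollary~\ref{dense: T(G)}, replacing the input Corollary~\ref{dense:t_j} by Corollary~\ref{sparse:tj}. Only the upper tail is asserted, so I would begin with Markov's inequality for the increasing map $x\mapsto e^{\lambda x}$, $\lambda>0$,
\[
\Pr\left[T(G) > \mathbb{E}[T(G)] + t\right] \le \mathbb{E}\big[e^{\lambda T(G)}\big]\, e^{-\lambda \mathbb{E}[T(G)] - \lambda t},
\]
and reduce the whole statement to a suitable upper bound on the moment-generating function $\mathbb{E}[e^{\lambda T(G)}]$.

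To bound the moment-generating function I would extract from Corollary~\ref{sparse:tj} the piecewise deterministic lower bound $f(n,j)$ on the conditional rate $Q_j$ of $t_j$ (equal to $Cj$, $(1-C\ln^{-1/4}(n))2j$, $C\min\{j,n-j\}$, $(1-C\ln^{-1/4}(n))(n-j)$, and $C(n-j)$ across the five regimes). In every regime $f(n,j)=\Theta(\min\{j,n-j\})$, so Lemma~\ref{stochbound} yields an independent family $\{t_j^+\}$ with $t_j^+\sim\mathsf{Exp}(f(n,j))$ that stochastically dominates $\{t_j\}$, and Lemma~\ref{momentgen} then gives $\mathbb{E}[e^{\lambda T(G)}]\le\exp\{\lambda\,\mathbb{E}[T^+(G)]+O(1)\}$ for every $\lambda$ below the minimal rate $\min_j f(n,j)$, which is a positive constant (this is the restriction hidden in $\lambda\in(0,C')$). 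Plugging this into the Markov estimate collapses it to
\[
\Pr\left[T(G) > \mathbb{E}[T(G)] + t\right] \le \exp\left\{\lambda\big(\mathbb{E}[T^+(G)]-\mathbb{E}[T(G)]\big)-\lambda t + O(1)\right\},
\]
so, since $\lambda=O(1)$, everything rests on proving $\mathbb{E}[T^+(G)]-\mathbb{E}[T(G)]=O(\ln^{3/4}(n))$.

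This last estimate is the main obstacle. Writing the difference as $\sum_j\big(f(n,j)^{-1}-\mathbb{E}[Q_j^{-1}]\big)$ and splitting along the five regimes, the two constant-rate end regimes and the bulk regime each contribute only harmonic tails of total length $O(\ln\ln(n))$, and the push--pull regime $\ln(n)\le j\le n/\ln^3(n)$ is handled by the \emph{two-sided} bound $Q_j=(1\pm C\ln^{-1/4}(n))2j$ from Corollary~\ref{sparse:tj}: there each summand is $O(\ln^{-1/4}(n)/j)$, and summing over $\Theta(\ln(n))$ indices produces exactly the advertised $O(\ln^{3/4}(n))$. The genuinely delicate regime is the pull-dominated phase $\ln(n)\le n-j\le n/\ln^3(n)$, where Corollary~\ref{sparse:tj} supplies only the \emph{one-sided} bound $Q_j\ge(1-C\ln^{-1/4}(n))(n-j)$ coming from the pull term, while the push term adds a further $\Theta(n-j)$ whose constant is not pinned down, since $U_j$ need not be connected and $I_j$ is too large to feed into Lemma~\ref{lemma:S connected}. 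A naive use of $f(n,j)=(1-C\ln^{-1/4}(n))(n-j)$ here would leave a $\Theta(\ln(n))$ gap. I would try to close it by upgrading the rate in this phase to a two-sided estimate $Q_j=(1\pm o(1))g(n,j)$ on the same scale, using Lemma~\ref{lemma:N(v)} (most neighbours of the uninformed nodes have degree $pn\pm\ln^{3/4}(n)$) together with $e(I_j,U_j)=\Theta((n-j)\ln(n))$ from Lemma~\ref{lemma:e(S,[n]-S)}. Should a clean two-sided rate resist these tools, the fallback is to recenter the tail at the \emph{random} quantity $M:=\sum_j Q_j^{-1}$: the conditional independence granted by Lemma~\ref{Masterlemma} makes $T(G)-M$ a martingale-difference sum with $\sum_j Q_j^{-2}=O(1)$, hence exponential moments $e^{O(1)}$ and $T(G)\le M+O(\ln^{3/4}(n))$ w.h.p., reducing the problem to the concentration of $M$ about its mean $\mathbb{E}[T(G)]$. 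Establishing that concentration in the pull-dominated phase is the crux.
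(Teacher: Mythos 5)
Your proposal takes exactly the paper's route: Markov's inequality for $e^{\lambda x}$, stochastic domination of $\{t_j\}_{j=1}^{n-1}$ by independent exponentials whose rates $f(n,j)$ are read off from i)--v) of Corollary~\ref{sparse:tj} via Lemma~\ref{stochbound}, the bound $\mathbb{E}[e^{\lambda T(G)}]\le\exp\{\lambda\,\mathbb{E}[T^+(G)]+O(1)\}$ from Lemma~\ref{momentgen}, and the evaluation of $\mathbb{E}[T^+(G)]$ by repeating the computation of Lemma~\ref{sparse: expectation}. Your regime-by-regime accounting of $\mathbb{E}[T^+(G)]-\mathbb{E}[T(G)]$ is also correct as far as it goes.

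The obstacle you single out is genuine, and --- importantly --- the paper's own proof does not overcome it: after computing $\mathbb{E}[T^+(G)]$ it simply asserts the claimed tail bound, which implicitly uses $\mathbb{E}[T^+(G)]-\mathbb{E}[T(G)]=O(\ln^{3/4}(n))$ without justification. In the regime $\ln(n)\le n-j\le n/\ln^3(n)$ the rate of $t_j^+$ is the one-sided bound $(1-C\ln^{-1/4}(n))(n-j)$ coming from the pull term alone, whereas $e(I_j,U_j)=\Theta((n-j)\ln(n))$ (Lemma~\ref{lemma:e(S,[n]-S)} applied to $U_j$) together with $d(v)=\Theta(\ln(n))$ forces the push term of $Q_j$ to be at least a positive constant times $n-j$ as well; hence $Q_j\ge(1+\delta)(n-j)$ for some constant $\delta>0$, so $f(n,j)^{-1}-\mathbb{E}[Q_j^{-1}]=\Omega(1/(n-j))$, and the difference genuinely accumulates to $\Theta(\ln(n))$ over this regime. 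Neither of your proposed repairs is executed: the two-sided rate estimate is blocked because $U_j$ need not be connected (so Lemma~\ref{lemma:S connected} cannot be applied to it), and the martingale recentering still leaves the concentration of $\sum_j Q_j^{-1}$ about $\mathbb{E}[T(G)]$ unproved, as you concede. So your write-up is incomplete at exactly this point, but it is an honest diagnosis of a step the paper glosses over rather than a defect relative to the paper's argument; note that the method does deliver, without further work, the weaker (and for Theorem~\ref{MainThm2} still sufficient) statement $\Pr[T(G)>\mathbb{E}[T^+(G)]+t]\le\exp\{-\lambda t+O(1)\}$ with $\mathbb{E}[T^+(G)]\le 1.5H_{n-1}+O(\ln^{3/4}(n))$.
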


\begin{proof}
The proof is similar to the proof of Corollary~\ref{dense: T(G)}, so we highlight only the differences. Applying Markov's inequality implies that 
\begin{equation}
\label{eq:Tuppera2}
\Pr[T(G) > \mathbb{E}[T(G)] + t ]\leq \mathbb{E}\left[e^{-\lambda T(G)} \right] \exp\left\{ - \lambda \mathbb{E}[T(G)] + \lambda t \right\}.
\end{equation}
Applying Lemma \ref{stochbound} and Corollary \ref{dense:t_j} we infer that a sequence $\{t_j^+\}_{j=1}^{n-1}$ of independent random variables stochastically bounds $\{t_j\}_{j=1}^{n-1}$ if we choose $t_j^+$ to be exponentially distributed with parameter 
\begin{alignat*}{2}
C\min\{j,n-j\}, ~\big(1 - C\ln^{-1/4}(n)\big)2j\textrm{, or}~\big(1 - C\ln^{-1/4}(n)\big)(n-j).
\end{alignat*}
according to i) -- v) of Corollary \ref{sparse:tj}. Moreover letting $T^+(G):=\sum_{j=1}^{n-1} t_j^+$ we can determine the expectation of $T^+(G)$ in repeating the calculation in the proof of Lemma \ref{sparse: expectation} and get
\[
\mathbb{E}[T^+(G)]=\ln(n) + O\left(\ln^{3/4}(n)\right).
\]
By applying Lemma \ref{momentgen} and using the above considerations we can bound \eqref{eq:Tuppera2} by
\[
\exp\left\{- \lambda t + O\left(\ln^{3/4}(n)\right)\right\},
\]
where $\lambda$ has to be restricted to the interval $(0,C)$.\qed
\end{proof}

\section{Variations of the Asynchronous Push-Pull Protocol}
\label{sec:vop}
\subsection{The Effect of Transmission Failures -- Proposition~\ref{MainThm3}}

In this section we consider a more general version of the asynchronous push-pull protocol, in which nodes succeed to push or pull the rumor with probability $q \in (0,1]$ and fail to do so with probability $1-q$ independently of any other contacts established between any two nodes. Our aim is to give a statement in the spirit of Lemma \ref{Masterlemma} for this ``faulty'' version. We get the following quantitative statement.
\begin{lemma}
\label{Masterlemma2}
Let $1 \le j < n$. Then $t_j$ is exponentially distributed with parameter $q \cdot Q_j$, where $Q_j$ is given in Lemma \ref{Masterlemma}.
Moreover, conditional on $I_j$ the time $t_j$ is independent of $t_1, \ldots t_{j-1}$.
\end{lemma}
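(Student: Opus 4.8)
The plan is to reduce the faulty process to the non-faulty one by thinning the underlying Poisson processes. First I would recall the proof of Lemma~\ref{Masterlemma}: there, each node fires according to an independent rate~$1$ Poisson process, and upon firing an informed node pushes to a uniformly random neighbor while an uninformed node pulls from a uniformly random neighbor. The key additional ingredient now is that each such contact \emph{succeeds} independently with probability~$q$. By the thinning property of Poisson processes, marking each event of a rate~$1$ Poisson process independently as a ``success'' with probability~$q$ yields a Poisson process of rate~$q$ that governs the successful contacts, independently across nodes. Thus the faulty protocol is \emph{exactly} the non-faulty protocol in which every node is equipped with an independent rate~$q$ (rather than rate~$1$) Poisson process.

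With this observation the computation mirrors Lemma~\ref{Masterlemma} verbatim. Conditioning on~$I_j$ and on $t_1,\dots,t_{j-1}$, I would argue that the probability that a \emph{successful} contact moves an uninformed node into the informed set, given that some node fires, is still
\[
q_j = \frac{1}{n}\left(\sum_{v \in I_j} \frac{\vert N(v) \cap U_j \vert}{d(v)} + \sum_{w \in U_j} \frac{\vert N(w) \cap I_j \vert}{d(w)}\right),
\]
since conditioning on a \emph{successful} firing does not change which node fires or which neighbor is selected. The only change is that successful contacts now arrive according to the superposition of the thinned processes, whose aggregate rate is~$qn$ rather than~$n$: the time between consecutive successful contacts is exponentially distributed with parameter~$qn$. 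Hence $t_j$ is exponentially distributed with parameter $qn \cdot q_j = q\,Q_j$, as claimed.

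The independence statement carries over unchanged: conditional on the value of~$I_j$, the dynamics after entering state~$j$ depend only on the memoryless waiting times of the thinned Poisson processes going forward, which are independent of the history~$t_1,\dots,t_{j-1}$ by the strong Markov/memorylessness property. I would state this as a direct consequence of the independence already established in Lemma~\ref{Masterlemma}, since thinning preserves independence across nodes and the forward increments of a Poisson process are independent of its past.

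The only genuinely delicate point, and the one I would be most careful about, is justifying that the thinning is applied \emph{per contact} and remains independent of the rumor-spreading history. Concretely, one must check that the success/failure coin for each firing is drawn independently of which node fires, which neighbor is chosen, and the entire past of the process; this is what legitimizes treating the successful events as a genuine rate~$q$ Poisson process decoupled from the state dynamics. Once this independence is made explicit, the rest is an immediate reprise of Lemma~\ref{Masterlemma} with~$n$ replaced by~$qn$ in the aggregate firing rate, so I would not belabor the routine computation.
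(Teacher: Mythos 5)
Your proposal is correct and is essentially the paper's own argument: the paper folds the factor $q$ into the per-attempt success probability (obtaining $q_j(q) = q\,q_j$ for attempts arriving at aggregate rate $n$), whereas you fold it into the arrival rate via Poisson thinning (successful contacts at rate $qn$, each informing a new node with probability $q_j$); both yield the same product $q\,Q_j$. The independence claim likewise carries over exactly as in Lemma~\ref{Masterlemma}, so no further comment is needed.
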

\begin{proof}
The probability that $v \in I_j$ informs a uninformed node in a push attempt is~$q{\vert N(v) \cap U_j \vert}/{d(v)}$. Similarly, the probability that $w\in U_j$ is informed in a pull attempt is~$q{\vert N(w) \cap I_j \vert}/{d(v)}$. Therefore, the probability that a node is informed in a push or pull attempt is 
$$
q_j(q) = \frac{q}{n}~\left(\sum_{v \in I_j} \vert N(v) \cap U_j \vert/d(v) + \sum_{w \in U_j} \vert N(w) \cap I_j \vert/d(w) \right).
$$ 
As in the proof of Lemma \ref{Masterlemma}, it follows that $t_j$ is exponentially distributed with parameter $n q_j(q) = q \cdot Q_j$. \qed
\end{proof}
An immediate consequence of Lemma~\ref{Masterlemma2} is that for arbitrary graphs $G$, $\mathbb{E}[T_q(G)] = 1/q~\mathbb{E}[T(G)]$. Moreover, we can repeat all steps performed in Sections~\ref{sec:pthm1thm2} and~\ref{sec:pt3t4} to study the effect of the success probability $q$ on the time that is required to spread the rumor to all nodes of $G_{n,p}$. The steps are literally the same, with the only difference being that the parameters in all involved exponentially distributions are multiplied by an additional factor of $q$. For example, in Corollary~\ref{dense:t_j}, Equation~\eqref{eq:parboundsdense} is replaced by $\left(1 \pm \sqrt{33/\alpha(n)} \right) 2qj(n-j)/(n-1) $
and similarly, in the conclusions i)--v) of Corollary~\ref{sparse:tj} we get the bounds $\Theta(q \min\{j, n-j\})$, $(1 - O(\ln^{-1/4}(n)))2qj$, and $(1 - O(\ln^{-1/4}(n)))q(n-j)$. The proofs in Section~\ref{sec:pt3t4} are adapted accordingly to obtain asymptotically the same bounds as in Theorems~\ref{MainThm1} and~\ref{MainThm2}.

\subsection{The Effect of Faulty Nodes -- Proposition~\ref{MainThm4}}

Suppose that before the rumor is spread by the asynchronous push-pull protocol, a random subset of the nodes of $G_{n,p}$ is declared ``faulty'', in the sense that even if they receive the rumor, they will neither perform any push operation, nor will they respond to any pull request.

Note that if the subset of faulty nodes is of size $o(n)$ and $p \ge (1 + \eps)\ln(n)/n$ for some $\eps>0$, then the subgraph of $G_{n,p}$ induced by the non-faulty nodes is distributed like $G_{n',p'}$, where
\begin{equation}
\label{eq:faulty nodes}
n' = (1-o(1))n \qquad \text{ and } \qquad p' \ge (1+\eps - o(1))\frac{\ln(n)}{n}.
\end{equation}
Thus, if the initially informed node does not fault, then the results in the previous sections apply also in this case, where we replace $n$ and $p$ by the values given in~\eqref{eq:faulty nodes}.

\section{Numerical simulations}
\label{sec:numsim}

\begin{figure}
\centering
\includegraphics[width=\textwidth]{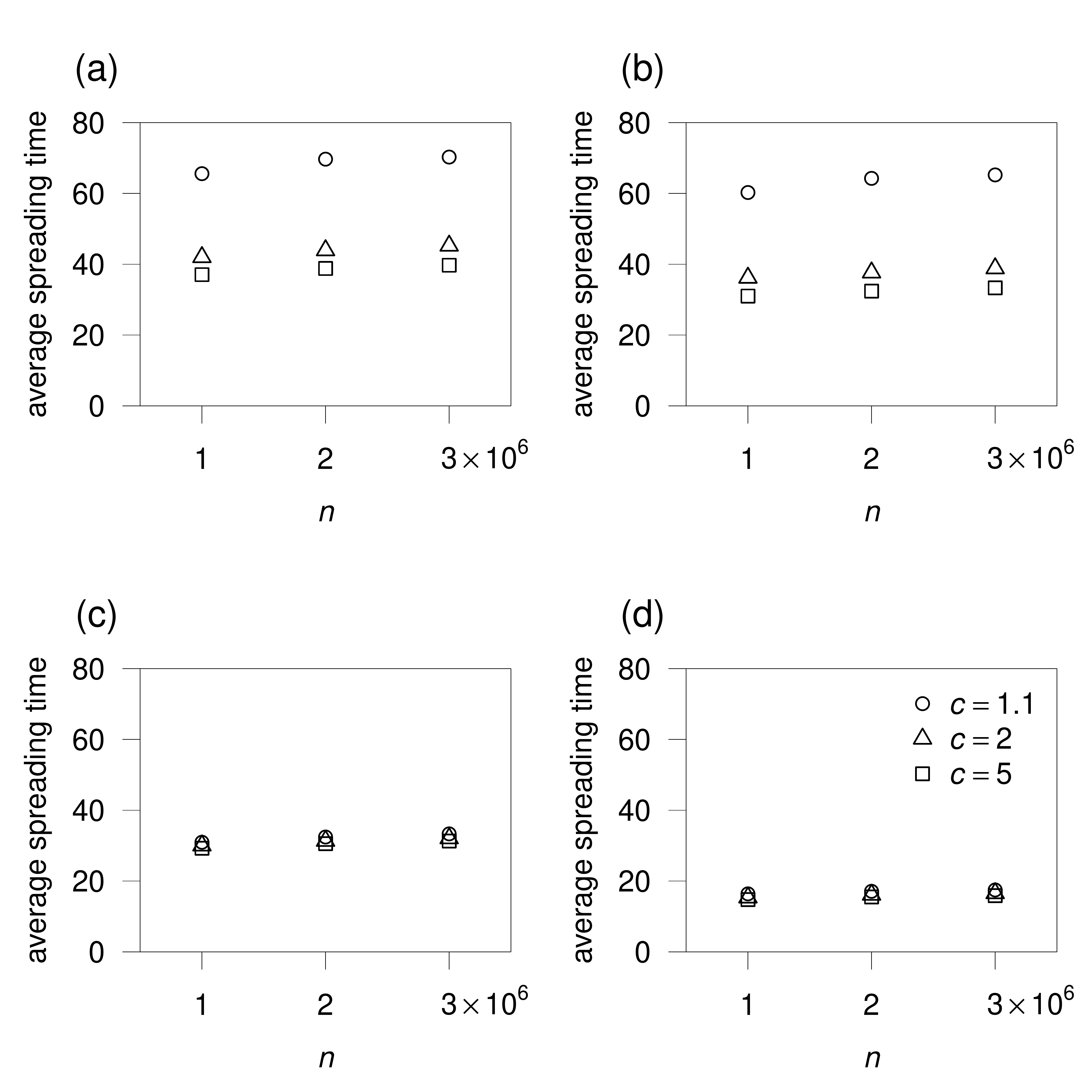}
\caption{
Spreading time of (a) synchronous push, (b) asynchronous push, (c) asynchronous pull, and (d) asynchronous push-pull averaged over 1000 realizations. Standard errors of the means are smaller than the plot symbols.  
\label{fig:sim}
}
\end{figure}

We performed numerical simulations of push, pull, and push-pull to compare the dependency of spreading times with respect to the edge probability $p$. For each protocol, we averaged the spreading time over 1000 realizations, where a realization involves generating the Erd\H{o}s-R\'{e}nyi random graph and simulating the protocol once on this graph. Random graphs where generated for each combination of $n \in \{ 10^6,2\cdot10^6,3\cdot10^6 \}$, and $p = c\ln(n)/n$, where $c \in \{ 1.1, 2, 5\}$. The results are shown in Figure~\ref{fig:sim}. 

We find that the spreading time of push significantly increases as $c$ decreases, while the spreading times of asynchronous pull and push-pull remain largely unaffected. In particular, dependency on $c$ of the spreading time of push is similar in the synchronous and asynchronous version. These results suggest that the $c$-independent bound on the spreading time of asynchronous push-pull (Theorem~\ref{MainThm2}) is obtained with the help of pull. In fact, this is expected because we can proof a $c$-independent bound of $2\ln(n) + O(\ln^{3/4}(n))$ for asynchronous pull. To obtain this bound, we note that we can prove an analogue of Corollary 2 with the difference being that (ii) is replaced by $\big(1 \pm C\ln^{-1/4}(n)\big)j$ for $\ln(n) \le j\le n/\ln^3(n)$. Then, by repeating steps in Lemma~\ref{sparse: expectation} and Corollary~\ref{sparse: T(G)}, we obtain the claimed bound.

Interestingly, push-pull is almost twice as fast as pull, suggesting that push significantly contributes to the spreading time.  For instance, for $n=3\cdot10^6$ in Fig.~\ref{fig:sim}, push-pull is $1.91$, $1.94$, and $1.97$ times faster than pull for $c = 1.1, 2$, and $5$, respectively.

\bibliographystyle{spmpsci}
\bibliography{bibliography}{}
\end{document}